\documentclass{sig-alternate-05-2015}

\pdfpagewidth=8.5in
\pdfpageheight=11in

\newfont{\mycrnotice}{ptmr8t at 7pt}
\newfont{\myconfname}{ptmri8t at 7pt}
%
%
% 
% \permission{Permission to make digital or hard copies of all or part of this work for personal or classroom use is granted without fee provided that copies are not made or distributed for profit or commercial advantage and that copies bear this notice and the full citation on the first page. Copyrights for components of this work owned by others than the authoN(s) must be honored. Abstracting with credit is permitted. To copy otherwise, or republish, to post on servers or to redistribute to lists, requires prior specific permission and/or a fee. Request permissions from permissions@acm.org.}
% \conferenceinfo{}{\dots \\
% {\mycrnotice{Copyright is held by the owner/author(s). Publication rights licensed to ACM.}}}
% \copyrightetc{ACM \the\acmcopyr}
% \crdata{000-0-0000-0000-0/00/00\ ...\$15.00.\\
% http://dx.doi.org/00.0000/0000000.0000000}

\usepackage{amsmath,amssymb,latexsym}
\usepackage{paralist}
\usepackage{array}
\usepackage{cancel}
\usepackage{color}
\usepackage{xspace}
\usepackage{subcaption}
\usepackage{tikz}

\usetikzlibrary{decorations}
\usetikzlibrary{decorations.pathmorphing}
\usetikzlibrary{decorations.pathreplacing}
\usetikzlibrary{decorations.shapes}
\usetikzlibrary{decorations.text}
\usetikzlibrary{decorations.markings}
\usetikzlibrary{decorations.fractals}
\usetikzlibrary{decorations.footprints}
\usetikzlibrary{shapes}

\newcommand{\set}[1]{\left\{ #1 \right\}}
\newcommand{\card}[1]{\left| #1 \right|}
\newcommand{\varineq}{\vartriangleleft}
\newcommand{\baserel}{A_{\sf Bse}}
\newcommand{\targetrel}{A_{\sf Sol}}
\newcommand{\targetcard}{M_{\sf Sol}}
\newcommand{\ur}{\mathit{UR}}
\newcommand{\I}{\mathcal{I}}

\newcommand{\bod}{\mathsf{BoD}}
\newcommand{\bode}{\mathsf{BoD_E}}
\newcommand{\bodu}{\mathsf{BoD_U}}
\newcommand{\sod}{\mathsf{SoD}}
\newcommand{\sode}{\mathsf{SoD_E}}
\newcommand{\sodu}{\mathsf{SoD_U}}
\newcommand{\gbl}{\mathsf{G}_{card}}
\newcommand{\lcl}{\mathsf{L}_{card}}
\newcommand{\fnc}{(=, 1)}

\newcommand{\APEP}{\textsc{APEP}\xspace}
\newcommand{\DAPEP}{\textsc{D-APEP}\xspace}

\newcommand{\OAPEP}{\textsc{O-APEP}\xspace}
\newcommand{\maxAPEP}{\textsc{maxAPEP}\xspace}

\newcommand{\APEParg}[1]{\APEP\langle #1 \rangle}
\newcommand{\DAPEParg}[1]{\DAPEP\langle #1 \rangle}

\newcommand{\maxAPEParg}[1]{\maxAPEP\langle #1 \rangle}

\newtheorem{theorem}{Theorem}

\newtheorem{corollary}[theorem]{Corollary}

\newtheorem{definition}[theorem]{Definition}

\newtheorem{lemma}[theorem]{Lemma}

\newtheorem{proposition}[theorem]{Proposition}
\newtheorem{remark}[theorem]{Remark}

% \sloppy

\clubpenalty=10000 
\widowpenalty = 10000

\begin{document}

\title{The Authorization Policy Existence Problem}

\numberofauthors{4}
\author{
\alignauthor
Pierre Berg\'e\\
  \affaddr{LRI, Universit\'e Paris-Saclay}\\
\affaddr{B\^{a}t 650, Rue Noetzlin, 91190 Gif-sur-Yvette}\\
  \affaddr{France}
  \email{Pierre.Berge@supelec.fr}
\alignauthor
Jason Crampton\\
\affaddr{Royal Holloway}\\
 \affaddr{University of London}\\
\affaddr{Egham, TW20 9QY} \\
\affaddr{United Kingdom} 
 \email{jason.crampton@rhul.ac.uk}\\
\and
\alignauthor
Gregory Gutin \\
\affaddr{Royal Holloway}\\
\affaddr{University of London}\\
\affaddr{Egham, TW20 9QY} \\
\affaddr{United Kingdom} 
 \email{g.gutin@rhul.ac.uk}
\alignauthor
R\'emi Watrigant \\
  \affaddr{INRIA Sophia-Antipolis}\\
  \affaddr{2004 route des Lucioles}\\
  \affaddr{06902 Sophia-Antipolis, France}
   \email{remi.watrigant@inria.fr}
}

% \date{}

\CopyrightYear{xxx}
\setcopyright{acmcopyright}
\conferenceinfo{xxx}{xxx}
\isbn{xxx}\acmPrice{xxx}
\doi{xxx}

\maketitle%
\begin{abstract}
Constraints such as separation-of-duty are widely used to specify requirements that supplement basic authorization policies.
However, the existence of constraints (and authorization policies) may mean that a user is unable to fulfill her/his organizational duties because access to resources has been denied.
In short, there is a tension between the need to protect resources (using policies and constraints) and the availability of resources.
Recent work on workflow satisfiability and resiliency in access control asks whether this tension compromises the ability of an organization to achieve its objectives.
In this paper, we develop a new method of specifying constraints which subsumes much related work and allows a wider range of constraints to be specified.
The use of such constraints leads naturally to a range of questions related to ``policy existence'', where a positive answer means that an organization's objectives can be realized.
We analyze the complexity of these policy existence questions and, for particular sub-classes of constraints defined by our language, develop fixed-parameter tractable algorithms to solve them.
\end{abstract}

\begin{CCSXML}
<ccs2012>
<concept>
<concept_id>10002978.10002991.10002993</concept_id>
<concept_desc>Security and privacy~Access control</concept_desc>
<concept_significance>500</concept_significance>
</concept>
<concept>
<concept_id>10002978.10002986.10002988</concept_id>
<concept_desc>Security and privacy~Security requirements</concept_desc>
<concept_significance>300</concept_significance>
</concept>
<concept>
<concept_id>10003752.10003809.10010052.10010053</concept_id>
<concept_desc>Theory of computation~Fixed parameter tractability</concept_desc>
<concept_significance>500</concept_significance>
</concept>
</ccs2012>
\end{CCSXML}

\ccsdesc[500]{Security and privacy~Access control}
\ccsdesc[300]{Security and privacy~Security requirements}
\ccsdesc[500]{Theory of computation~Fixed parameter tractability}

\printccsdesc

\keywords{access control; resiliency; satisfiability; computational complexity; fixed-parameter tractability}

\vfill\eject

\section{Introduction}\label{sec:intro}

Access control is a fundamental aspect of the security of any multi-user computing system, and is typically based on the specification and enforcement of an authorization policy.
Such a policy identifies which interactions between users and resources are to be allowed by the system.

Over the last twenty years, access control requirements have become increasingly complex, leading to increasingly sophisticated authorization policies, often expressed in terms of constraints.
A separation-of-duty constraint (also known as the ``two-man rule'' or ``four-eyes policy'') may, for example, require that no single user is authorized for some particularly sensitive group of resources.
Such a constraint is typically used to prevent misuse of the system by a single user.

The use of authorization policies and constraints, by design, limits which users may access resources.
Nevertheless, the ability to perform one's duties requires access to particular resources, and overly prescriptive policies and constraints may mean that some resources are inaccessible.
In short, ``tension'' may exist between authorization policies and operational demands: too lax a policy may suit organizational demands but lead to security violations; whereas too restrictive a policy may compromise an organization's ability to meet its business objectives.

Recent work on workflow satisfiability and access control resiliency has recognized the importance of being able to identify whether or not security policies prevent an organization from achieving its objectives~\cite{CrGuWa15,CrGuYe13,LiWaTr09,MaMoMo14,WaLi10}.
In this paper, we seek to generalize existing work in this area.
Specifically, we introduce the {\sc Authorization Policy Existence Problem} (\APEP), which may be treated as a decision or optimization problem.
Informally, \APEP seeks to find an authorization policy, subject to restrictions on individual authorizations (defined by a ``base'' authorization relation) and restrictions on collective authorizations (defined by a set of authorization constraints).
We show that a number of problems in the literature on workflow satisfiability and resiliency are special cases of \APEP, thereby showing that \APEP is computationally hard.

The framework within which \APEP is defined admits a greater variety of constraints than is usually considered in either the standard access control literature~\cite{BrNa89,GlGaFe98,LiTrBi07,SaCoFeYo96} or in workflow satisfiability~\cite{BaBuKa14,CoCrGaGuJo14,WaLi10}.
In this paper we characterize the constraints of interest and extend the definition of user-independent constraints~\cite{CoCrGaGuJo14} to this framework.
We then establish the complexity of \APEP for certain types of constraints, using both classical and multi-variate complexity analysis.
In this paper, we make some progress in this direction by establishing the complexity of \APEP for the constraints that we believe will be the most useful in practice.
In particular, we establish connections between \APEP and both the {\sc Workflow Satisfiability Problem} and resiliency in access control.

In the next section, we summarize relevant background material and related work.
We introduce the \APEP in Section~\ref{sec:apep} and elaborate on the nature of the constraints we consider in Section~\ref{sec:constraints}.
In Section~\ref{sec:apep-complexity}, we investigate the complexity of several variants of the \APEP.
We then discuss further constraint types to and connections between \APEP and existing problems in the literature.
We conclude the paper with a summary of our contributions and some ideas for future work.

\section{Background and related work}\label{sec:background}

A number of interesting (and computationally hard) problems arise naturally in the context of authorization policies and constraints.
However, the relative sizes of the parameters in many of these problems mean that it is fruitful to analyze these problems using multivariate complexity analysis.
In this section, we review some of those problems and provide a brief introduction to fixed-parameter tractability.

\subsection{Fixed-parameter tractability}

Many problems take multiple inputs and the complexity of solving such problems is determined by the sizes of those inputs.
In general, a problem may be hard in terms of the total size of the input.
However, if we consider the complexity of a problem under an assumption that some of the parameters of the input are small and terms that are exponential in those parameters are acceptable, then we may discover that relatively efficient algorithms exist to solve the problem.

\sloppy
More formally, an algorithm is said to be \emph{fixed-parameter tractable} (FPT) if it solves a decision problem in time $O(f(k)p(n))$, where $k$ is some (small) parameter of the problem, $n$ is the total size of the input, and $f$ and $p$ are, respectively, a computable function and a polynomial. 
As is customary in the literature on FPT algorithms, we will often write $O(f(k)p(n))$ as $O^*(f(k))$.
(That is, $O^*$ suppresses polynomial factors, as well as multiplicative constants.)
If a problem can be solved using an FPT algorithm then we say that it is an \emph{FPT problem} and that it belongs to the class FPT~\cite{DowFel13,Ni06}. 
An FPT algorithm for a hard problem is particularly valuable when $k$ is significantly smaller than $n$ for most instances of the problem that arise in practice.
% Naturally, the term $f(k)$ will be exponential, but
% The class FPT significantly extends the class of computationally tractable problems, i.e. problems that can be solved efficiently. 
In particular, FPT algorithms of practical value have been developed for the {\sc Workflow Satisfiability Problem} and its generalizations~\cite{CoCrGaGuJo14,CrGuKa15,KaGaGu15}.

\fussy

\subsection{Workflow satisfiability}

A workflow may be modeled as a set of steps in some automated business process.
An authorization relation determines which users are authorized to perform which steps, and constraints restrict which subsets of users may perform subsets of steps~\cite{BeFeAt99,Cr05,WaLi10}.
Given a set of users $U$, a set of workflow steps $S$, an authorization relation $A \subseteq S \times U$, and a set of constraints $C$, a plan $\pi : S \rightarrow U$ is a function allocating users to steps.
A valid plan must allocate an authorized user to each step and every constraint must be satisfied.
The {\sc Workflow Satisfiability Problem} (WSP) asks whether there exists a valid plan for a given $U$, $S$, $A$ and $C$.
Basin, Burri and Karjoth model a workflow (with ``break points'') as a process algebra and introduced the notion of an \emph{enforcement process}, which is analogous to a valid plan~\cite{BaBuKa14}.
This leads naturally to the \emph{enforcement process existence} problem, which is analogous to the workflow satisfiability problem and inspires the name for the problem we study in this paper.

Wang and Li observed that fixed-parameter algorithmics is an appropriate way to study WSP, because the number of steps is usually small and often much smaller than the number of users.%
 \footnote{The SMV loan origination workflow studied by Schaad \emph{et al.}, for example, has 13 steps and identifies five roles~\cite{ScLoSo06}. 
	   It is generally assumed that the number of users is significantly greater than the number of roles.}
Wang and Li \cite{WaLi10}  proved that WSP is FPT if we consider only separation-of-duty and binding-of-duty constraints \cite{WaLi10}. We will denote such constraints as $(s,s',\neq)$ and $(s,s',=)$, respectively, where $s,s'\in S$. A plan $\pi$ satisfies a constraint  $(s,s',\neq)$ ($(s,s',=)$, respectively) if $\pi(s)\neq \pi(s')$ ($ \pi(s) = \pi(s')$, respectively). WSP with only separation-of-duty constraints (only binding-of-duty constraints, respectively) will be denoted by WSP($\neq$) (WSP(=), respectively). 
Subsequent research has shown that WSP remains FPT even if additional types of constraints, notably user-independent~\cite{CoCrGaGuJo14} and class-independent constraints~\cite{CrGaGuJo15}, are permitted in the input to WSP. 
Note that WSP is not FPT in general \cite{WaLi10} unless a widely-accepted hypothesis in complexity theory fails; a significant body of research suggests this is highly unlikely~\cite{DowFel13}. 

\subsection{Static separation-of-duty constraints}

Constraints have been studied extensively in the context of role-based access control (RBAC)~\cite{GlGaFe98,LiTrBi07,SaCoFeYo96,SiZu97}.
In its simplest form, a static separation-of-duty constraint may be defined as a pair of roles $\set{r,r'}$ belonging to the set of roles $R$.
A user-role assignment relation $\ur \subseteq U \times R$, where $U$ is the set of users, satisfies the constraint $\set{r,r'}$ if there is no user $u$ such that $(u,r)$ and $(u,r')$ belong to $\ur$.

More generally, Li, Tripunitara and Bizri \cite{LiTrBi07} introduced the notion of a $q$-out-of-$m$ static separation-of-duty constraint, defined as a pair $(R',m)$, where $R'$ is a subset of $R$ of cardinality $q$.
A user-role assignment relation $\ur \subseteq U \times R$ satisfies the constraint $(R',m)$ if there is no set of $t < q$ users that are collectively authorized for $R'$.
That is, for all subsets $V$ of $U$ such that $\card{V} < q$,
\[
 \bigcup_{u \in V} \set{r : (u,r) \in \ur} \subset R' .
\]
Note that the simple separation-of-duty constraint defined by a pair of roles $\set{r,r'}$ is a $2$-out-of-$2$ separation-of-duty constraint.

\subsection{Resiliency}

Li, Wang and Tripunitara introduced the idea of resiliency in access control~\cite{LiWaTr09}.
Informally, a resiliency policy is a requirement that even in absence of a limited number of users the remaining users can be authorized for some set of resources such that given constraints are satisfied.
%an authorization relation enables multiple teams to be formed, such that the users in each team are collectively authorized for some set of resources.
%Thus, a resiliency policy is a requirement on users to be authorized, whereas a separation-of-duty policy is a requirement restricting which users are authorized.
The existence of both a resiliency policy  and authorization constraints may mean that no authorization relation can satisfy all requirements.
Li \emph{et al.} introduce a number of problems investigating whether an authorization relation does exist~\cite{LiWaTr09}.

\section{The authorization policy\\ existence problem}\label{sec:apep}

In this paper, we extend existing work on workflow satisfiability, constraints and resiliency, by defining a simple yet very expressive authorization framework. Roughly speaking, we specify a problem dealing with the existence of an appropriate authorization relation.

Given a set of users $U$ and a set of resources $R$ to which access should be restricted, we may define an \emph{authorization relation} $A \subseteq U \times R$, where $(u,r) \in A$ if and only if $u$ is authorized to access $r$.
Given a resource $r$, we will write $A(r)$ to denote the set of users that are authorized to access resource $r$.
More formally, $A(r) = \set{u \in U : (u,r) \in A}$.
Similarly, for $u \in U$, we will write $A(u)$ to denote the set of resources that $u$ is authorized to access, that is $A(u) = \set{r \in R : (u,r) \in A}$.
We extend this notation to subsets of $R$ and $U$ in the natural way: for $R' \subseteq R$ and $U' \subseteq U$, 
\[
 A(R') \stackrel{\rm def}{=} \bigcup_{r \in R'} A(r) \quad\text{and}\quad A(U') \stackrel{\rm def}{=} \bigcup_{u \in U'} A(u). 
\]

We introduce two fundamental concepts, which will be used to formulate the {\sc Authorization Policy Existence Problem}.
\begin{itemize}
 \item a \emph{base authorization relation} $\baserel \subseteq U \times R$ such that $\baserel(r)\neq \emptyset$ for each $r\in R$;
%  \item a \emph{target authorization relation} $\targetrel$; and
 \item a set of \emph{authorization constraints} $C$.
\end{itemize}
Informally, $\baserel$ specifies restrictions on all valid authorization relations, while $C$ specifies additional restrictions that any valid authorization relation must satisfy.

An authorization constraint may be defined by enumerating the set of all authorization relations that satisfy the constraint.
% Thus, we say $\targetel$ satisfies a constraint $c$ if $A \in c$.
Of course, an extensional definition of this nature is utterly impractical, and all useful constraints will be defined in an intensional manner.
A simple example would be a constraint requiring no user is assigned to both resources $r$ and $r'$.
In other words, an authorization relation $A$ satisfies this constraint provided that $\set{(u,r),(u,r')} \not\subseteq A$ for all $u \in U$.
We discuss constraints in more detail in Section~\ref{sec:constraints}.
% Thus, a \emph{valid} target authorization relation must be \emph{authorized}, \emph{eligible} and \emph{complete}.

More formally, we have the following definitions.
Given a base authorization relation $\baserel$ and a set of constraints $C$, we say an authorization relation $A \subseteq U \times R$ is 
  \begin{itemize}
   \item \emph{authorized} with respect to $\baserel$ if $A \subseteq \baserel$;
   \item \emph{complete} if $A(r)\neq \emptyset$ for every $r \in R$;
   \item \emph{eligible} with respect to $C$ if $A$ satisfies $c$ for all $c \in C$; and
   \item \emph{valid} with respect to $\baserel$ and $C$ if $A$ is authorized, complete and eligible.
  \end{itemize}

We introduce the term {\sc Authorization Policy Existence Problem} (\APEP) as a generic term for questions related to finding a valid authorization relation, given $\baserel$ and $C$.
Then \APEP comes in two ``flavors'':
\begin{description}
 \item[Decision (D-\APEP):] Does there exist a valid authorization relation? If so, find a valid authorization relation.
 %\item[Search (S-\APEP):] 
 \item[Optimization (O-\APEP):] Find a ``best'' valid authorization relation if one exists (where the objective function has to be specified).
\end{description}
We assume that determining whether an authorization relation satisfies a constraint takes polynomial time.
(This is a reasonable assumption for all constraints of relevance to access control.)
Let $n$ denote $\card{U}$, $k$ denote $\card{R}$ and $m$ denote $\card{C}$.
Then a brute force approach to solving D-\APEP (by simply examining every possible authorization relation) takes time $O^*(2^{nk})$. 
%In what follows, we may assume that $k\le n$ as otherwise there is no complete authorization relations. -> wrong according to me (Pierre)
%, where the notation $O^*$ ignores polynomial terms (in $n$, $k$ and $m$).

A few special cases of \APEP are worth mentioning.
For simplicity we will write $\targetrel$ to denote one of the authorization relations that are solutions to \APEP.
\begin{enumerate}
 \item $\targetrel$ is required to be a \emph{function} $\targetrel : R \rightarrow U$.  
       In this case, we allocate a unique user to each resource.
       Computing a plan allocating one user to each step in a workflow instance, subject to an authorization policy defined by $\baserel$ and some constraints $C$, is an example of this type of scenario.
       In this case, D-\APEP corresponds to the {\sc Workflow Satisfiability Problem} (WSP)~\cite{WaLi10}.
       Moreover, the {\sc Cardinality-Constrained Minimum User Problem}~\cite{RoSuMaVaAt15}, whose solution is a plan using the minimum number of users, is an instance of O-\APEP.
 \item $\baserel = U \times R$.  In this case, $\baserel$ itself imposes no restrictions on $\targetrel$.  
       All restrictions on $\targetrel$ are defined by the constraints $C$.
       Defining an authorization policy in the presence of static separation-of-duty constraints is an example of this type of scenario.
       Li, Tripunitara and Bizri have studied problems of this nature~\cite{LiTrBi07}.
 \item A constraint that requires each resource is assigned to at least $t$ users enables us to model problems related to resiliency in access control~\cite{LiWaTr09} and workflow systems~\cite{WaLi10}.
 \item If we seek to maximize the cardinality of $\targetrel$, then, informally, a solution to O-\APEP provides a ``resilient'' authorization policy.
       While this is different from existing notions of resiliency~\cite{LiWaTr09,WaLi10}, it would seem to be an interesting way of approaching the problem of making an authorization policy resilient to the unavailability of users.
\end{enumerate}

\section{Constraints}\label{sec:constraints}

We now describe constraints in more detail.
We generalize the approach used in prior work on constraints for workflow systems~\cite{CrGuYe13,WaLi10}.

\subsection{Binding-of-duty and separation-of-duty\\ constraints}\label{sec:bod-and-sod-constraints}
% 
% Let $\sim$ be either $=$ or $\neq$, and $Q$ be either $\forall$ or $\exists$. 
% Given $r, r' \in R$, we define the constraint $(r, r', \sim, Q)$, which is satisfied by a relation $A \subseteq U \times R$ if and only if the following proposition is true:
% $$Q u \in A(r) \cup A(r'), \left[(u, r) \in A \right] \sim \left[ (u, r') \in A \right]$$
% Here, $\sim$ (\emph{i.e.} either $=$ or $\neq$) has to be interpreted as a relation between two logical propositions.
% If $Q = \forall$ (resp. $Q = \exists$), then we call such constraints ``universal'' (resp. existential).
% 
\sloppy
\emph{Binding-of-duty} and \emph{separation-of-duty} constraints have received considerable attention in the access control literature, and such constraints may be \emph{static} or \emph{dynamic}.
Informally, static constraints specify restrictions on policy relations, whereas dynamic constraints specify constraints on particular sequences of events within the context of an access control system.
A (static) separation-of-duty constraint, for example, in the context of role-based access control system, might require that no user is authorized for both roles $r$ and $r'$~\cite{SaCoFeYo96}.
In contrast, a (dynamic) separation-of-duty constraint, in the context of a workflow system, might simply require that two steps $s$ and $s'$ are performed by different users in each instance of the workflow~\cite{Cr05,WaLi10}.
(This constraint, however, does not prevent the same user being authorized for both those steps.)
Within the framework of \APEP, we seek to define a more general (and uniform) syntax and semantics for constraints.

We express constraints in terms of the following logical (binary) operators defined via their respective truth tables:
\[
 \begin{array}{c|c|c|c|c|c}
  p & q & p \leftrightarrow q & p \rightarrow q & p \leftarrow q & p \updownarrow q \\
 \hline
  0 & 0 & 1 & 1 & 1 & 0 \\
  0 & 1 & 0 & 1 & 0 & 1 \\
  1 & 0 & 0 & 0 & 1 & 1 \\
  1 & 1 & 1 & 1 & 1 & 0 \\
 \end{array}
\]
Let $r$ and $r'$ be resources in $R$; let $\circ$ denote one of the logical operators in the set $\set{\leftrightarrow,\leftarrow,\rightarrow,\updownarrow}$; and let $Q$ be one of the first-order quantifiers $\exists$ or $\forall$.
Then $(r,r',\circ,Q)$ is a constraint: a constraint of the form $(r,r',\circ,\forall)$ is said to be \emph{universal}, while a constraint of the form $(r,r',\circ,\exists)$ is said to be \emph{existential}.
A complete relation $A$ 
 \begin{itemize}
  \item satisfies $(r,r',\circ,\exists)$ if there exists $u \in A(r) \cup A(r')$ such that the propositional formula $(u \in A(r)) \circ (u \in A(r'))$ evaluates to true; and
  \item satisfies $(r,r',\circ,\forall)$ if for all $u \in A(r) \cup A(r')$, the propositional formula $(u \in A(r)) \circ (u \in A(r'))$ evaluates to true.
 \end{itemize}
Note that for any complete relation $A$ and any $r \in R$, $A(r) \ne \emptyset$, so $A(r) \cup A(r') \ne \emptyset$.
Thus constraints cannot be vacuously satisfied by a valid relation.

Informally speaking, universal constraints are ``stronger'' than existential constraints: (for any complete relation) the satisfaction of $(r,r',\sim,\forall)$ implies the satisfaction of $(r,r',\sim,\exists)$, but the converse does not hold.
%Again, in an informal sense, a universal constraint corresponds to a static constraint by fixing certain aspects of any relation $A$ that satisfies the constraint.
%Conversely, an existential constraint may admit many different relations that satisfy it.

We now expand these generic definitions for the four constraints defined by $(r,r',\circ,Q)$, where $\circ$ is either $\leftrightarrow$ or $\updownarrow$:
\begin{enumerate}
 \item $(r,r',\leftrightarrow,\exists)$ is satisfied if there exists $u \in U$ such that $u \in A(r)$ and $u \in A(r')$; that is, $A(r) \cap A(r') \ne \emptyset$.
 \item $(r,r',\updownarrow,\exists)$ is satisfied if there exists $u \in U$ such that either \begin{inparaenum}[(i)]\item $u \in A(r)$ and $u \not\in A(r')$ or \item $u \not\in A(r)$ and $u \in A(r)$\end{inparaenum};
       that is, $A(r) \ne A(r')$.
 \item $(r,r',\leftrightarrow,\forall)$ is satisfied if for all $u \in A(r) \cup A(r')$, $u \in A(r)$ if and only if $u \in A(r')$;
       that is, $A(r) = A(r')$.
 \item $(r,r',\updownarrow,\forall)$ is satisfied if for all $u \in A(r) \cup A(r')$, either \begin{inparaenum}[(i)]\item $u \in A(r)$ and $u \not\in A(r')$ or \item $u \not\in A(r)$ and $u \in A(r')$\end{inparaenum}; 
       that is, $A(r) \cap A(r') = \emptyset$.
\end{enumerate}
Thus, constraints of the form $(r,r',\updownarrow,Q)$ correspond closely to the idea of separation-of-duty.
Indeed, the satisfaction criterion for $(r,r',\updownarrow,\forall)$ is identical to that for a simple static separation-of-duty constraint.
Similarly, constraints of the form $(r,r',\leftrightarrow,Q)$ correspond to the idea of binding-of-duty.

Now consider a constraint of the form $(r,r',\rightarrow,\forall)$.
Such a constraint is satisfied if for all $u \in A(r) \cup A(r')$, $(u \in A(r)) \rightarrow (u \in A(r'))$.
In other words, $A(r) \subseteq A(r')$.
Thus a global constraint of this form could be used to specify a role hierarchy (in which role $r'$ is senior to $r$).
Conversely, a constraint of the form $(r,r',\leftarrow,\forall)$ could be used to specify a role hierarchy in which $r'$ is junior to to $r$.%
  \footnote{Since $A(r)$ and $A(r')$ are non-empty, the constraints $(r,r',\leftarrow,\exists)$ and $(r,r',\rightarrow,\exists)$ are both equivalent to \mbox{$(r,r',\leftrightarrow,\exists)$}.}
Thus we can use constraints to insist that a hierarchy is strict: that is, there exists at least one user that is assigned to $r'$ but not $r$.
Specifically, a relation $A$ simultaneously satisfies constraints $(r,r',\rightarrow,\forall)$ and $(r,r',\updownarrow,\exists)$ only if $A(r) \subset A(r')$.

\subsection{Cardinality constraints}

We may also define \emph{cardinality constraints}, which come in two flavors. In the following, $\varineq$ is one of $=$, $<$, $>$, $\leqslant$ or $\geqslant$ and $t$ is an integer greater than $0$.  
 \begin{itemize}
  \item A \emph{global} (cardinality) constraint has the form $(\varineq,t)$.
	The constraint $(\varineq,t)$ is satisfied by relation $A$ if for all $r \in R$, $\card{A(r)} \varineq t$.
  \item A \emph{local} (cardinality) constraint has the form $(R',\varineq,t)$, where $R' \subseteq R$.
	The constraint $(R',\varineq,t)$ is satisfied by relation $A$ if $\card{A(R')} \varineq t$.
 % \item ??? Do we want constraints defined by $R'$, $\varineq$ and $t$ whose satisfaction criterion is  $\sum_{r \in R'} \card{A(r)} \varineq t$???
 \end{itemize}
 \sloppy Then, for example, the global constraint $(=,1)$ requires a valid relation $A$ to be a function (since the number of users assigned to each resource is precisely $1$), while the local constraint $(\set{r},\leqslant,t)$ is a cardinality constraint in the RBAC96 sense~\cite{SaCoFeYo96} (if resource $r$ is interpreted as a role).
 Finally, the $p$-out-of-$q$ static separation-of-duty constraint \mbox{${\sf ssod}(\set{r_1,\dots,r_q},p)$}, introduced by Li \emph{et al.}~\cite{LiTrBi07}, may be represented by the cardinality constraint \mbox{$(\set{r_1,\dots,r_q},\geqslant,p)$}.

 \begin{remark}\label{rem1}
  If we define a global constraint $(=,1)$, then the universal constraint $(r,r',\circ,\forall)$ is equivalent to the existential constraint $(r,r',\circ,\exists)$ (in the sense that the former is satisfied if and only if the latter is).
 \end{remark}
 
 \subsection{User-independent constraints}
 
 Research on the {\sc Workflow Satisfiability Problem} has shown that the notion of \emph{user-independent} (UI) constraints is very important.
 First, the class of UI constraints includes a very wide range of constraints and almost all constraints that are of relevance to access control.
 Second, WSP is fixed-parameter tractable (FPT) if we restrict attention to UI constraints~\cite{CoCrGaGuJo14}.
 (WSP is not FPT if we allow arbitrarily complex constraints~\cite{WaLi10}.)
 Informally, a constraint is UI in the context of workflow satisfiability if its satisfaction only depends on the relationships that exist between users assigned to steps in a workflow (and not on the specific identities of users)~\cite{CoCrGaGuJo14}.
 We now extend the definition of user-independent used in the context of workflow satisfiability.
 
 \sloppy
 Let $A$ be an authorization relation and $\sigma : U \rightarrow U$ a permutation of the user set (that is, $\sigma$ is a bijection).
 Then, given an authorization relation $A \subseteq U \times R$, we write $\sigma(A) \subseteq U \times R$ to denote the relation \mbox{$\set{(\sigma(u),r) : (u,r) \in A}$}.
 A constraint $c$ is \emph{user-independent} if for every authorization relation $A$ that satisfies $c$ and every permutation $\sigma : U \rightarrow U$, $\sigma(A)$ satisfies~$c$.
 
 \fussy
 Consider, for example, a constraint of the form $(r,r',\leftrightarrow,\exists)$ and suppose that $A \subseteq U \times R$ satisfies the constraint.
 Then, by definition, there exists a user $u$ such that $(u,r),(u,r') \in A$.
 Then, for any permutation $\sigma$, $(\sigma(u),r),(\sigma(u),r') \in \sigma(A)$, so $\sigma(A)$ also satisfies the constraint.
 Similar arguments may be used to show that constraints of the form \mbox{$(r,r',\leftrightarrow,\forall)$}, \mbox{$(r,r',\updownarrow,\exists)$} and $(r,r',\updownarrow,\forall)$ are all UI.
 Equally, it is clear that global and local constraints, whose satisfaction is defined in terms of the cardinality of sets of the form $A(r)$, are UI, since a permutation (being a bijection) will preserve the cardinality of such sets.
 In other words, all constraints we consider in this paper are UI.

\subsection{Bounded UI constraints} \label{subsec:bcc}

\renewcommand{\ker}{\mathsf{core}}

We now define an important class of UI constraints that will be useful for establishing positive results in the remainder of the paper.
Given a base relation $\baserel$ and a constraint $c$, let $A$ be valid with respect to $\baserel$ and $c$.
We say $A$ \emph{requires $v$} if $\set{(u,r) \in A : u \ne v}$ is not valid.
(Since $A$ is valid, this means that $\set{(u,r) \in A : u \ne v}$ is either incomplete or does not satisfy $c$.)
Then we define \[ \ker(A : \baserel,c) \stackrel{\rm def}{=} \set{u \in U : \text{$A$ requires $u$}} \] to be the \emph{core of $A$ with respect to $\baserel$ and $c$}.
% and write $\ker(A)$ to denote the cardinality of the kernel of $A$.

% Thus the core of $A$ represents those users that are required for $A$ to be valid.
Consider for instance a constraint $c$ of the form $(r,r',\updownarrow,\forall)$. If there exists an authorization relation $A$ satisfying $c$, then there is one whose core contains at least two users: indeed, remove iteratively from $A$ any couple $(u, r) \in A$ such that $A$ does not require $u$. When this is no longer possible, the obtained relation has to allocate two distinct users to $r$ and $r'$, both belonging to the core.
% any core must contain at least two users, since $r$ and $r'$ must be assigned to different users in a valid relation.
The core could contain as many as $k$ users, since the relation may allocate each resource to a different user and removing any user would compromise the completeness of the relation.
However, the core cannot contain more than $k$ users, since in any set of at least $k+1$ users, at least two users must be allocated to the same resource and one of them could be removed without compromising the completeness or the eligibility of the relation.
Conversely, for a constraint of the form $(r,r',\leftrightarrow,\forall)$ the core could contain a single user but no core contains more than $k-1$ users, since $r$ and $r'$ must be assigned to the same user and any additional users may be removed without compromising completeness or eligibility.
% 
% \begin{proposition}
%  Let $\I = (\baserel,C)$ be an instance of \DAPEP such that all constraints in $C$ are UI.
%  Then for any solution $A$ of $\I$ and any constraint $c \in C$, there exists a relation $A'$ such that 
%  \[
%   |\ker(A' : U \times R,c)| \geqslant |\ker(A : \baserel,C)|.
%  \]
% \end{proposition}

\begin{proposition}\label{pro:kernel-maximized-by-full-relation}
Let $\I = (\baserel, C)$ be a satisfiable instance of \DAPEP with a UI constraint $c \in C$. 
If $A$ is a valid solution with respect to $\baserel$ and $c$ then $$|\ker(A : U \times R, c)| \ge |\ker(A : \baserel, c)|$$
\end{proposition}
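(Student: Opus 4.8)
The plan is to prove the stronger fact that the two cores are in fact \emph{equal}; the stated inequality is then immediate. The key observation is that the base relation influences the predicate ``$A$ requires $v$'' only through the \emph{authorized} condition $A' \subseteq (\text{base})$, and since any valid solution with respect to $\baserel$ satisfies $A \subseteq \baserel \subseteq U \times R$, every sub-relation $A'$ of $A$ is automatically authorized with respect to both $\baserel$ and $U \times R$; hence that condition contributes nothing to either comparison.

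Concretely, I would first check that $\ker(A : U \times R, c)$ is well defined, i.e.\ that $A$ is valid with respect to $U \times R$ and $c$: the inclusion $A \subseteq U \times R$ is trivial, while completeness of $A$ and satisfaction of $c$ are inherited from validity with respect to $\baserel$. Then, fixing an arbitrary $v \in U$ and writing $A_{-v} = \set{(u,r) \in A : u \ne v}$, I would note that $A_{-v} \subseteq A \subseteq \baserel \subseteq U \times R$, so $A_{-v}$ is authorized with respect to both base relations. Therefore ``$A_{-v}$ is not valid with respect to $\baserel$ and $c$'' and ``$A_{-v}$ is not valid with respect to $U \times R$ and $c$'' are each equivalent to the single statement ``$A_{-v}$ is incomplete or violates $c$''. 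Hence $A$ requires $v$ under one base relation exactly when it requires $v$ under the other, so $\ker(A : \baserel, c) = \ker(A : U \times R, c)$, and in particular $\card{\ker(A : U \times R, c)} \ge \card{\ker(A : \baserel, c)}$.

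There is no real obstacle here beyond careful bookkeeping with the definitions; the only point deserving a line of justification is the well-definedness of $\ker(A : U \times R, c)$, handled in the step above. It is also worth remarking that the hypotheses `$\I$ is satisfiable' and `$c$ is user-independent' are not actually used: the first only guarantees that the statement is not vacuous, and user-independence plays no role, since the argument never permutes users. If one prefers to keep the statement phrased as a genuine inequality (e.g.\ because the downstream use only needs $\ge$), the same argument applies verbatim.
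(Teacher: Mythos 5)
Your proof is correct and rests on the same observation as the paper's: removing a user from an authorized relation cannot break authorization, so ``$A$ requires $v$'' reduces to ``$A_{-v}$ is incomplete or violates $c$'', a condition independent of the base relation. The only difference is cosmetic --- you note this gives equality of the two cores, whereas the paper proves just the inclusion $\ker(A : \baserel, c) \subseteq \ker(A : \Omega, c)$ for any $\Omega \supseteq \baserel$, which already yields the stated inequality.
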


\begin{proof}
\sloppy
We prove the more general statement that $\ker(A : \baserel, c) \subseteq  \ker(A : \Omega, c)$ for any $\Omega \supseteq \baserel$. 
Suppose $A$ requires $u$ and let $A \setminus u$ denote $\set{(v, r) \in A : u \neq v}$.
Then $A \setminus u$ is either incomplete or violates $c$ (since $A$ is authorized, so is $A \setminus u$). 
If $A \setminus u$ is incomplete, then it is also incomplete for the instance $(\Omega, c)$. 
The same argument holds if $A \setminus u$ violates $c$, which concludes the proof.
\end{proof}

\begin{definition}
\sloppy
We say a UI constraint $c$ is $f(k,n)$-\emph{bounded} if $\card{\ker(A : U \times R,c)} \leqslant f(k,n)$ for all $A$ valid with respect to $U \times R$ and $c$.
\end{definition}
% Finally, we write 
% \[
%  \ker(\baserel,c) = \max\set{\ker(A) : \text{$A$ is valid w.r.t. $\baserel$ and $c$}}
% \]
% By definition, $\ker(A,\baserel,C)$ imposes an upper limit on the number of users that we need to consider when constructing candidate solutions to an instance $(\baserel,C)$ of \DAPEP.
The definition of $f(k,n)$-bounded constraints and Proposition~\ref{pro:kernel-maximized-by-full-relation} impose an upper bound on the number of users we need to consider when constructing candidate solutions to an instance $(\baserel,C)$ of \DAPEP.
% If all constraints are $f(k)$-bounded then the number of candidate solutions that we need to consider depends on the smaller parameter.

We have proved several results which establish $f(k,n)$ for a number of constraint types.
These results are summarized in Table~\ref{tbl:bcc-constraints}; proofs can be found in the appendix.
Note that in all cases, $f(k,n)$ is independent of $n$.
This is important as we show in Section~\ref{sec:apep-complexity} that \DAPEP is FPT when all constraints are $f(k)$-bounded for some function $f$. 

\begin{table}[h]\centering\setlength{\extrarowheight}{2pt}
 \begin{tabular}{|r|r|}
 \hline
  \bf Constraint Type & \bf Largest Core \\
 \hline
  $(r,r',\updownarrow,\forall)$, $(r,r',\updownarrow,\exists)$ & $k$ \\
%   $(r,r',\updownarrow,\exists)$ & $k$ \\
  $(r,r',\leftrightarrow,\forall)$, $(r,r',\rightarrow,\forall)$, $(r,r',\leftrightarrow,\exists)$ & $k-1$ \\
%   $(r,r',\rightarrow,\forall)$ & $k-1$ \\
%   $(r,r',\leftrightarrow,\exists)$ & $k-1$ \\
 \hline
  $(R',\leq,t)$ & $k$ \\
  $(R',=,t)$, $(R',\geq,t)$ & $2\max\set{k,t}$ \\
%   $(R',\geq,t)$ & $2\max\set{k,t}$ \\
 \hline
 \end{tabular}
 \caption{Upper bounds on the size of the core}\label{tbl:bcc-constraints}
\end{table}

 \subsection{Notation}

 In the remainder of this paper, we consider versions of \APEP in which we restrict our attention to particular types of constraints.
 We use the following abbreviations for families of constraints:%
 \begin{inparaenum}[(i)]
  \item $\bod$ to denote the family of all existential and universal (binding-of-duty) constraints having the form $(r,r',\leftrightarrow,\exists)$ or $(r,r',\leftrightarrow,\forall)$;
  \item $\sod$ to  denote the family of all existential and universal (separation-of-duty) constraints having the form $(r,r', \updownarrow,\exists)$ or $(r,r',\updownarrow,\forall)$;
  \item $\bode$ and $\bodu$ to denote, respectively the family of all existential and universal binding-of-duty constraints;
  \item $\sode$ and $\sodu$ to denote, respectively the family of all existential and universal separation-of-duty constraints;
  \item $f(k)$-bounded to denote $f(k)$-bounded-certificate constraints;
  \item $\gbl$ to denote the family of all global cardinality constraints;  
  \item $\lcl$ to denote the family of all local cardinality constraints.
  %\item $\fnc$ to denote the constraint $(=,1)$, which requires the solution $\targetrel$ to be a function $\targetrel : R \rightarrow U$.
 \end{inparaenum}
 Finally, we write, for example, $\APEParg{ \bod, \lcl }$ to restrict the set of instances of \APEP in which the set of constraints $C$ contains only binding-of-duty and local cardinality constraints.

\section{Complexity of APEP}\label{sec:apep-complexity}

Before exploring the fine-grained complexity of $\DAPEP$ with respect to the different types of constraints, we first state general properties about some special cases.

Firstly, note that adding the function constraint $\fnc$ to any $\DAPEP$ instance having only SoD or BoD constraints forces any solution $\targetrel$ of $\DAPEP$ to be a function. In this case, $\DAPEP$ becomes equivalent to WSP. More formally, we say that a parameterized problem $A$ is \emph{parameter-reducible} to a parameterized problem $B$ (and we write $A \le_{\it fpt} B$) if there is a polynomial algorithm which transforms an instance $(I, k)$ of $A$ into an instance $(I', k')$ of $B$ such that $(I, k)$ is positive for $A$ iff $(I', k')$ is positive for $B$, and $k' \le f(k)$ for some computable function $f$. Clearly, if $A$ is parameter-reducible to $B$ and $B$ is FPT, then $A$ is FPT. We say that $A$ is \emph{parameter-equivalent} to $B$ (and we write $A =_{\it fpt} B$) if $A \le_{\it fpt} B$ and $B \le_{\it fpt} A$. The proof of the following result is straightforward, by the definition of the function constraint $\fnc$.
\begin{theorem}\label{thm:APEPequivWSP}
If $\DAPEP$ and WSP are parameterized by the number of resources and steps, respectively, we have the following:
\begin{itemize}
	\item $\DAPEParg{ \sod, \fnc }$ $=_{\it fpt}$ WSP$(\neq)$;
	\item $\DAPEParg{ \bod, \fnc }$ $=_{\it fpt}$ WSP$(=)$; and
	\item $\DAPEParg{ \bod, \sod, \fnc }$ $=_{\it fpt}$ WSP$(=, \neq)$.
\end{itemize}
\end{theorem}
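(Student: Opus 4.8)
The plan is to prove each of the three statements by exhibiting a pair of mutually inverse, polynomial-time translations between instances which preserve (non-)satisfiability and leave the parameter unchanged; since $|R|=|S|$ throughout, the definition of $=_{\it fpt}$ is then met with $f$ the identity. The key observation is that adjoining the global cardinality constraint $\fnc$ to a \DAPEP instance forces every complete, authorized, eligible relation to be a total function $\targetrel : R \to U$ with $\targetrel(r)\in\baserel(r)$ for all $r$ — completeness is then automatic, and being authorized is exactly the condition $\targetrel(r)\in\baserel(r)$ — so that valid solutions of $\DAPEParg{\sod,\fnc}$ (resp.\ $\DAPEParg{\bod,\fnc}$, $\DAPEParg{\bod,\sod,\fnc}$) are precisely the valid plans of a workflow in which resources are read as steps.

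Concretely, I would map a $\DAPEParg{\sod,\fnc}$ instance $(\baserel,C)$ to the WSP instance with step set $S=R$, authorization relation $A=\set{(r,u):(u,r)\in\baserel}$ (so every step $r$ has the non-empty set $\baserel(r)$ of authorized users), and constraint set obtained from $C$ by deleting $\fnc$ and replacing every separation constraint $(r,r',\updownarrow,\forall)$ or $(r,r',\updownarrow,\exists)$ by the WSP constraint $(r,r',\neq)$. The converse translation reverses this: from a WSP$(\neq)$ instance — where we may assume every step has an authorized user, as otherwise the instance is trivially unsatisfiable and can be sent to any fixed unsatisfiable \DAPEP instance — transpose $A$ to obtain $\baserel$, turn each $(s,s',\neq)$ into $(s,s',\updownarrow,\forall)$, and add $\fnc$. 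For $\DAPEParg{\bod,\fnc}$ and WSP$(=)$ one does the same with binding constraints mapped to and from $(r,r',=)$, and for $\DAPEParg{\bod,\sod,\fnc}$ one applies both substitutions at once.

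Correctness reduces to two routine facts, which I would spell out once. First, by the observation above, a relation satisfying $\fnc$ is valid for $(\baserel,C)$ iff it is a function $\targetrel$ with $\targetrel(r)\in\baserel(r)$ for all $r$ satisfying every member of $C\setminus\set{\fnc}$, which matches exactly the notion of a valid plan for the translated WSP instance. Second, evaluated on such a function, $(r,r',\updownarrow,\forall)$ holds iff $A(r)\cap A(r')=\emptyset$ iff $\targetrel(r)\ne\targetrel(r')$, the satisfaction condition of $(r,r',\neq)$; by Remark~\ref{rem1} the existential variant $(r,r',\updownarrow,\exists)$ has the same truth value on $\targetrel$, so the constraint-by-constraint substitution is well defined in both directions. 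Likewise $(r,r',\leftrightarrow,\forall)$ — and hence, again by Remark~\ref{rem1}, $(r,r',\leftrightarrow,\exists)$ — holds on $\targetrel$ iff $\targetrel(r)=\targetrel(r')$, i.e.\ iff $(r,r',=)$ is satisfied. There is no genuine obstacle; the only points needing care are the bookkeeping for the non-emptiness condition $\baserel(r)\ne\emptyset$ and the explicit use of Remark~\ref{rem1} to collapse the two quantifier flavors on function-valued relations, which is what makes the substitution legitimate regardless of which flavor occurs in the input.
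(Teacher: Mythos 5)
Your proposal is correct and matches the paper's intent: the paper omits a detailed argument, stating that the result is straightforward from the definition of the function constraint $\fnc$, and your write-up simply makes that direct correspondence explicit (the $\fnc$ constraint collapses valid relations to authorized functions $R\to U$, constraints translate verbatim, and Remark~\ref{rem1} handles the two quantifier flavors). The bidirectional, parameter-preserving translations you give, including the minor bookkeeping for steps with no authorized users, are exactly what is needed for $=_{\it fpt}$.
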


Moreover, the following result asserts that adding $\bodu$ constraints to any instance of $\DAPEP$ does not change its (parameterized) complexity.

\begin{theorem}\label{bodu-th}
\sloppy
Given any instance $(U, R, \baserel, C)$ of $\DAPEParg{\bod, \sod}$, one can obtain in polynomial time an instance $(U, R', \baserel', C')$ of $\DAPEP$ such that:
\begin{inparaenum}[(i)]
	\item $C'$ does not contain any $\bodu$ constraint,
	\item $|\baserel'| \le |\baserel|$,
	\item $|R'| \le |R|$, and
	\item $|C'| \le |C|$.
\end{inparaenum}
\end{theorem}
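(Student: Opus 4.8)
The plan is to eliminate $\bodu$ constraints one at a time: a $\bodu$ constraint $(r,r',\leftrightarrow,\forall)$ asserts $A(r)=A(r')$ for every complete relation $A$, so $r$ and $r'$ are interchangeable and one of them can be removed. I would repeatedly pick a $\bodu$ constraint $c=(r,r',\leftrightarrow,\forall)\in C$ and replace the current instance by a smaller one having one fewer $\bodu$ constraint, stopping when none remain (or when the instance is found to be trivially negative). Throughout, recall that in $\DAPEParg{\bod,\sod}$ every constraint has the form $(\rho_1,\rho_2,\circ,Q)$, so its satisfaction by a complete relation depends only on the two sets $A(\rho_1)$ and $A(\rho_2)$.

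First I would dispose of the degenerate subcases of the processing step. If $r=r'$, then $c$ is satisfied by every complete relation, so I simply delete $c$ from $C$; this leaves $U$, $R$, $\baserel$ unchanged and decreases $|C|$ by one. If $r\neq r'$ but $\baserel(r)\cap\baserel(r')=\emptyset$, the whole instance is negative, since a valid $A$ would satisfy $\emptyset\neq A(r)=A(r')\subseteq\baserel(r)\cap\baserel(r')$; in this case I output a fixed tiny negative instance, namely $U$, a single resource $r^{*}\in R$, base relation $\{(u^{*},r^{*})\}$ for some $u^{*}\in\baserel(r^{*})$, and the single constraint $(r^{*},r^{*},\updownarrow,\forall)$ (which requires $A(r^{*})=\emptyset$, impossible for a complete relation). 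Because $c\in C$ guarantees $R\neq\emptyset$, $\baserel\neq\emptyset$ and $|C|\geq 1$, this instance satisfies (ii)--(iv), contains no $\bodu$ constraint, and the procedure terminates.

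In the main subcase $r\neq r'$ and $\baserel(r)\cap\baserel(r')\neq\emptyset$, I would merge $r'$ into $r$: put $R'=R\setminus\{r'\}$; set $\baserel'(r)=\baserel(r)\cap\baserel(r')$ and $\baserel'(\rho)=\baserel(\rho)$ for the other resources; and build $C'$ by replacing every occurrence of $r'$ by $r$ throughout $C$ and then removing the constraint $(r,r,\leftrightarrow,\forall)$ (the rewrite of $c$, and possibly of a few other constraints, all now trivially true). One checks this is a well-formed \DAPEP instance ($\baserel'(\rho)\neq\emptyset$ everywhere, using the case hypothesis for $\rho=r$), with $|R'|<|R|$, $|\baserel'|\leq|\baserel|$, $|C'|<|C|$. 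For equivalence, the map $A\mapsto A\cap(U\times R')$ is a bijection from relations on $R$ satisfying $A(r)=A(r')$ to relations on $R'$, with inverse $A'\mapsto A'\cup\{(u,r'):u\in A'(r)\}$; along this bijection the value on any resource of $R$ matches the image's value (reading $r'$ as $r$), so ``authorized'', ``complete'' and the satisfaction of each surviving constraint transfer in both directions, while any constraint of $C$ on $\{r,r'\}$ deleted during the rewrite is automatically satisfied once $A(r)=A(r')$. Hence the two instances have the same answer.

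Finally I would check termination and the bounds. Renaming $r'$ to $r$ never alters an operator or quantifier, so it creates no new $\bodu$ constraint, and each processing step removes at least $c$; the number of $\bodu$ constraints therefore strictly decreases, giving at most $|C|$ iterations, each plainly polynomial. Since $|R|$, $|\baserel|$ and $|C|$ are non-increasing at every step, the final instance meets (ii)--(iv), and it has no $\bodu$ constraint by the stopping condition, establishing (i). I expect the only real care to be needed in the rewriting step --- handling constraints that mention $r$ and/or $r'$, including self-loops such as $(r',r',\circ,Q)$ and reversed copies $(r',r,\circ,Q)$ --- and in confirming that the canonical negative instance genuinely respects the size bounds (which is exactly why the argument records $c\in C$ before branching); neither is hard, but both must be done explicitly.
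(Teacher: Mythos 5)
Your proposal is correct and takes essentially the same approach as the paper: contract resources linked by universal binding-of-duty constraints, intersect their sets of authorized users in $\baserel$, rewrite the remaining constraints over the merged resources, and detect trivial unsatisfiability when a separation constraint collapses onto a single merged resource. The only differences are cosmetic --- you merge one $\bodu$ constraint at a time whereas the paper contracts the equivalence classes generated by all $\bodu$ constraints in a single pass, and you additionally spell out the empty-intersection case (outputting a canonical negative instance), which the paper leaves implicit.
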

\begin{proof} Let $C^*$ be the set of $\bodu$ constraints from $C$. 
The idea is to consider $\bodu$ constraints as an equivalence relation: for $r, r' \in R$, $r \sim_b r'$ if and only if $\left(r,r',\leftrightarrow,\forall \right) \in C^*$. 
Now, let $R' = \{R_1, \dots, R_q\}$ be the equivalence classes of $\sim_b$. 
Obviously, $|R'| \le |R|$. For all $i \in [q]$ and all $r \in R_i$, set $\baserel'(r) = \bigcap_{r' \in R_i} \baserel(r')$. 
Once again, it holds that $|\baserel'| \le |\baserel|$.
Finally, for every constraint $c=(r, r', \sim, Q) \in C\setminus C^*$ with $\sim \in \{=, \neq\}$ and $Q \in \{\forall, \exists\}$, we distinguish two cases:
\begin{itemize}
	\item if $r \in R_i$ and $r' \in R_j$ with $i \neq j$, then add the constraint $(R_i, R_j, \sim, Q)$ (notice that in this case $c$ is not a $\bodu$ constraint);
	\item if $r, r' \in R_i$ for some $i \in [k]$, then if $c$ is a $\sodu$ or $\sode$ constraint, obviously the instance is unsatisfiable (and we can output a trivially negative instance of $\DAPEP$).
\end{itemize}
Clearly $|C'| \le |C|$, and $C'$ does not contain any $\bodu$ constraint. Finally, one can check that the output instance is satisfiable if and only if the input instance is satisfiable.
\end{proof}

\begin{remark}
A corollary of Theorem \ref{bodu-th} is that we may exclude $\bodu$ constraints from consideration when establishing FPT results.
\end{remark}

In the remainder of this section, we establish that \DAPEP with bounded constraints is FPT. 
We also propose extra algorithms for some mixed policies composed of $\bod$ and $\sod$ constraints in order to improve the execution time of the main algorithm for these subcases. 
Figure~\ref{summary} summarizes our results for \DAPEP with $\bod$ and $\sod$ constraints. 

\begin{figure}[!bth]
\centering

\begin{tikzpicture}
\node[draw,align=center,minimum width=3.2cm] (besuse) at (3.0,7.0) {$\bode,\sodu,\sode$\\ \textcolor[rgb]{0,0,1}{FPT: $2^{2^k k^2}$}};
\node[draw,align=center,minimum width=2.3cm] (be) at (0.5,3.0) {$\bode$\\ \textcolor[rgb]{0,0,1}{P}};
\node[draw,align=center,minimum width=2.3cm] (su) at (3.0,3.0) {$\sodu$\\ \textcolor[rgb]{0,0,1}{FPT: $2^{k}$}};
\node[draw,align=center,minimum width=2.3cm] (se) at (5.5,3.0) {$\sode$\\ \textcolor[rgb]{0,0,1}{FPT: $2^{k}$}};
\node[draw,align=center,minimum width=2.3cm] (besu) at (0.5,5.0) {$\bode,\sodu$\\ \textcolor[rgb]{0,0,1}{FPT: $2^{k^2\log k^2}$}};
\node[draw,align=center,minimum width=2.3cm] (bese) at (3.0,5.0) {$\bode,\sode$\\ \textcolor[rgb]{0,0,1}{FPT: $2^{2^k k^2}$}};
\node[draw,align=center,minimum width=2.3cm] (suse) at (5.5,5.0) {$\sodu,\sode$\\ \textcolor[rgb]{0,0,1}{FPT: $2^{2^k k^2}$}};

\draw (besu) -- (besuse);
\draw (suse) -- (besuse);
\draw (bese) -- (besuse);
\draw (be) -- (besu);
\draw (su) -- (besu);
\draw (be) -- (bese);
\draw (se) -- (bese);
\draw (su) -- (suse);
\draw (se) -- (suse);

\end{tikzpicture}

\caption{Complexity of specific cases of D-APEP (polynomial factors are ignored)}
\label{summary}
\end{figure}
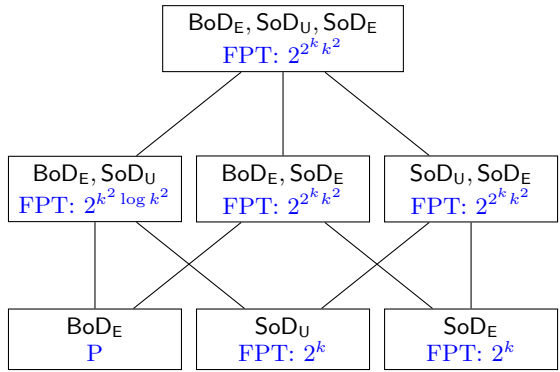

%\begin{remark}
%In fact, \DAPEP can be considered as a ``symmetric'' problem. Users and resources play a similar role. If \DAPEP is FPT parameterized by $k$ with specific UI constraints, then it is also FPT parameterized by $n$ with the transposed RI %constraints. For instance, as \DAPEP is FPT parameterized by $k$ with $\sodu$ constraints $(r,r',\updownarrow,\forall)$, \DAPEP is also FPT parameterized by $n$ with constraints like $(u,u',\updownarrow,\forall)$. 
%\end{remark}

\subsection{Instances with constraints of a single type}

As a direct application of Theorem \ref{bodu-th}, we are able to show that  $\DAPEParg{ \bodu }$ is polynomial-time solvable: indeed, after using the reduction of the previous result, it is clear that the instance is satisfiable if and only if $\baserel'$ is complete.

\begin{theorem}
 $\DAPEParg{ \bodu }$ is solvable in polynomial time.
\label{boduth}
\end{theorem}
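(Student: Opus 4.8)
The plan is to invoke Theorem~\ref{bodu-th} directly as a black box. Given an instance $(U, R, \baserel, C)$ of $\DAPEParg{\bodu}$, every constraint in $C$ is a $\bodu$ constraint, so the set $C \setminus C^*$ appearing in the construction of Theorem~\ref{bodu-th} is empty. Applying that construction therefore yields, in polynomial time, an instance $(U, R', \baserel', C')$ with $C' = \emptyset$: there are no new constraints to add, and the only surviving data are the equivalence classes $R' = \{R_1, \dots, R_q\}$ of the relation $\sim_b$ (with $r \sim_b r'$ iff $(r,r',\leftrightarrow,\forall) \in C$) together with the contracted base relation $\baserel'(r) = \bigcap_{r' \in R_i} \baserel(r')$ for $r \in R_i$. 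By Theorem~\ref{bodu-th}, the output instance is satisfiable iff the input is.

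Since $C' = \emptyset$, an authorization relation is valid for $(U, R', \baserel')$ precisely when it is authorized and complete. The largest such relation is $\baserel'$ itself, so a valid relation exists iff $\baserel'$ is complete, i.e. iff $\baserel'(r) \ne \emptyset$ for every $r \in R'$. Unwinding the definition, this is the condition that for each equivalence class $R_i$ the intersection $\bigcap_{r' \in R_i} \baserel(r')$ is nonempty. This test, and the construction of the $R_i$ and of $\baserel'$, are all performed in time polynomial in $|U|$, $|R|$ and $|C|$; moreover when the test succeeds we may return $\baserel'$ (or, pulling back through the contraction, the relation on $U \times R$ assigning to each original resource $r \in R_i$ the users in $\bigcap_{r' \in R_i}\baserel(r')$) as an explicit valid solution. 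Hence $\DAPEParg{\bodu}$ is solvable in polynomial time.

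There is essentially no obstacle here: the entire content is already packaged in Theorem~\ref{bodu-th}, and the only thing to observe is that the hypothesis ``all constraints are $\bodu$'' forces the leftover constraint set to be empty, reducing the validity question to a completeness check on a single, efficiently computable relation. If one wished to avoid citing Theorem~\ref{bodu-th}, the same argument can be made self-contained in a line: compute the connected components of the graph on $R$ whose edges are the $\bodu$ pairs, and check that each component has a common authorized user; this is clearly polynomial and clearly necessary and sufficient, since a complete relation $A$ satisfies all $\bodu$ constraints iff $A(r) = A(r')$ whenever $r$ and $r'$ lie in the same component, which is possible (while keeping $A \subseteq \baserel$ and $A$ complete) iff every component admits a common user in the intersection of the corresponding $\baserel(r)$'s.
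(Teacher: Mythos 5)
Your proof is correct and follows essentially the same route as the paper: apply the reduction of Theorem~\ref{bodu-th}, observe that the residual constraint set is empty, and reduce satisfiability to checking completeness of $\baserel'$, i.e.\ that each $\sim_b$-equivalence class of resources has a common user in the intersection of its base authorizations. Your self-contained connected-components reformulation is just an unpacking of the same argument, so nothing further is needed.
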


We now consider the complexity of $\DAPEParg{\sodu}$ and \mbox{$\DAPEParg{\bode}$}.
% instances containing only $\sodu$ constraints and instances containing only $\bode$ constraints.

\begin{theorem}\label{sodufnc}
 $\DAPEParg{ \sodu }$ $=_{\it fpt}$  $\DAPEParg{ \sodu, \fnc}$.
\end{theorem}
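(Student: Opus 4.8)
The plan is to prove the equivalence $\DAPEParg{\sodu} =_{\it fpt} \DAPEParg{\sodu, \fnc}$ by establishing reductions in both directions. The direction $\DAPEParg{\sodu} \le_{\it fpt} \DAPEParg{\sodu, \fnc}$ is the interesting one: given an instance $(U, R, \baserel, C)$ with only $\sodu$ constraints (constraints of the form $(r,r',\updownarrow,\forall)$, i.e.\ $A(r) \cap A(r') = \emptyset$), I want to transform it into an instance whose solution is a function $\targetrel : R \to U$, without blowing up the number of resources beyond a computable function of $k = |R|$. The reverse direction, $\DAPEParg{\sodu, \fnc} \le_{\it fpt} \DAPEParg{\sodu}$, is essentially free: an instance with constraint set $C \cup \{\fnc\}$ is already a legitimate $\DAPEP$ instance, but to land inside $\DAPEParg{\sodu}$ (which forbids the global constraint $\fnc$) I would encode the ``at most one user per resource'' requirement using $\sodu$ constraints plus a gadget, or alternatively observe that the forward direction already shows $\DAPEParg{\sodu}$ is at least as hard, and handle this direction by a direct argument. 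In fact the cleanest route is to note that if we can solve $\DAPEParg{\sodu}$ then we can solve $\DAPEParg{\sodu,\fnc}$: enumerate, or better, use the bounded-core machinery — a $\sodu$ constraint is $k$-bounded by Table~\ref{tbl:bcc-constraints}, so $\DAPEParg{\sodu}$ is FPT, hence so is $\DAPEParg{\sodu,\fnc}$ which is just WSP$(\neq)$, also FPT; but for a genuine $=_{\it fpt}$ I should give honest parameter-reductions rather than invoke FPT-ness.

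The core idea for the forward reduction exploits the bounded-core property. By Proposition~\ref{pro:kernel-maximized-by-full-relation} and the fact that $\sodu$ constraints are $k$-bounded, any satisfiable $\DAPEParg{\sodu}$ instance has a valid solution $A$ whose core has at most $k$ users; restricting $A$ to its core yields a valid relation in which every resource is assigned to at least one core user, and I can iteratively delete assignments $(u,r)$ that are not required, arriving at a valid $A'$ with $|A'(r)| \ge 1$ for all $r$ and $A'$ using at most $k$ distinct users. But I want a \emph{function}. The key observation is that for $\sodu$ constraints, if $A'$ is valid and minimal in the above sense, I can further pass to a subrelation that assigns exactly one user per resource: pick for each $r$ a single user $u_r \in A'(r)$, preferring to reuse users already picked when the disjointness constraints allow it. The resulting function $\pi$ still satisfies every $\sodu$ constraint $(r,r',\updownarrow,\forall)$ because $\pi(r) \ne \pi(r')$ whenever $A'(r) \cap A'(r') = \emptyset$ — wait, this needs care: if $A'(r) \cap A'(r') = \emptyset$ then automatically any $u_r \in A'(r)$ and $u_{r'} \in A'(r')$ differ, so the constraint is satisfied; and if there is no constraint between $r, r'$ there is nothing to check. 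So \emph{every} valid solution of $\DAPEParg{\sodu}$ that is minimal can be ``thinned'' to a valid function. Conversely a valid function is trivially a valid relation. Hence $(U,R,\baserel,C)$ is a yes-instance of $\DAPEParg{\sodu}$ iff $(U,R,\baserel,C)$ is a yes-instance of $\DAPEParg{\sodu,\fnc}$ — using the \emph{same} instance, just adding $\fnc$ — with $k' = k$.

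So the plan reduces to proving the single lemma: \emph{a $\DAPEParg{\sodu}$ instance is satisfiable if and only if it admits a valid solution that is a function}. Forward: given valid $A$, greedily choose one user per resource as above; the only thing to verify is completeness (immediate, since we pick a user for each $r$) and eligibility (shown above, since $\sodu$ is a purely disjointness requirement that is inherited by any subrelation selecting one representative per resource). Backward: a valid function is a valid relation, and adding $\fnc$ to $C$ is satisfied precisely because $|{\targetrel}(r)| = 1$ for each $r$. Both transformations are the identity on $(U,R,\baserel,C)$ up to adding/removing $\fnc$, hence polynomial-time, and $k' = k$, giving $=_{\it fpt}$.

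The main obstacle I anticipate is making the ``thinning'' argument airtight: I must confirm that selecting one representative user per resource from a valid relation $A$ never violates a $\sodu$ constraint. This is true because $\sodu$ constraints only forbid sharing a user between two specified resources, and a subrelation that keeps at most what $A$ had at each resource cannot create new shared users. The subtlety would only arise with $\sode$ (existential) constraints $(r,r',\updownarrow,\exists)$, which require $A(r) \ne A(r')$ and hence are \emph{not} inherited by arbitrary subrelations — but those are explicitly excluded here, so the argument goes through cleanly for $\sodu$ alone. A secondary point to handle carefully is authorization: the chosen representatives must lie in $\baserel(r)$, which is automatic since $A \subseteq \baserel$ and we pick $u_r \in A(r) \subseteq \baserel(r)$.
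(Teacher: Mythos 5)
Your proposal is correct and takes essentially the same approach as the paper: the key step --- thinning any valid solution $\targetrel$ by picking one arbitrary user $u_r \in \targetrel(r)$ for each resource, with $\sodu$ (disjointness) constraints automatically inherited by the subrelation --- is exactly the paper's argument, and both directions then follow from the identity reduction with $\fnc$ added or removed. The bounded-core machinery you initially invoke is unnecessary, as your own final argument shows.
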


\begin{proof}
It is sufficient to prove that there always exists a feasible solution of $\DAPEParg{ \sodu }$ which satisfies the function constraint $\fnc$. %By Theorem~\ref{thm:APEPequivWSP}.

Let us suppose $(U, R, \baserel,C)$ is satisfiable, so there exists a valid solution $\targetrel$ for this instance. 
We define another relation $A'$ which is a function. 
For any $r \in R$, we have $\card{\targetrel(r)} \geq 1$, thus we can pick an arbitrary user $u_r$ in $\targetrel(r)$ and set $A'(r) = \set{u_r}$. 
One can observe that $A' \subseteq \targetrel \subseteq \baserel$, and thus $A'$ is indeed a function. For a constraint $\left(r,r',\updownarrow,\forall\right) \in C$, we have $\targetrel(r) \cap \targetrel(r') = \emptyset$, but since $A'(r) \subseteq \targetrel(r)$ and $A(r') \subseteq \targetrel(r')$, we have $A'(r) \cap A'(r') = \emptyset$ as well. Thus, $A'$ is valid for $(U, R, \baserel,C')$ and is clearly a function.
\end{proof}

Since, by Theorem~\ref{thm:APEPequivWSP}, $\DAPEParg{ \sodu, \fnc }$ $=_{\it fpt}$ WSP$(\neq)$, and since WSP$(\neq)$ is NP-hard and FPT parameterized by the number of steps~\cite{CrGuYe13}, we obtain the following corollary:

\begin{corollary}
 $\DAPEParg{ \sodu }$ is NP-hard and FPT parameterized by $k$.
\end{corollary}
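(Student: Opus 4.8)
The plan is to compose the two parameter-equivalences already in hand. By Theorem~\ref{sodufnc}, $\DAPEParg{\sodu} =_{\it fpt} \DAPEParg{\sodu,\fnc}$, and the reduction there leaves $U$ and $R$ untouched, so the parameter $k=\card{R}$ is literally unchanged. By Theorem~\ref{thm:APEPequivWSP} — using Remark~\ref{rem1} to observe that in the presence of $\fnc$ a universal constraint $(r,r',\updownarrow,\forall)$ behaves exactly like the corresponding existential one, so that $\DAPEParg{\sodu,\fnc}$ is the same problem as $\DAPEParg{\sod,\fnc}$ — we get $\DAPEParg{\sodu,\fnc} =_{\it fpt}$ WSP$(\neq)$, with $k$ corresponding to the number of steps. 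Chaining these yields $\DAPEParg{\sodu} =_{\it fpt}$ WSP$(\neq)$ with parameter $k$.

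For the NP-hardness claim, I would note that a parameter-reduction, by the definition recalled at the start of Section~\ref{sec:apep-complexity}, is computed by a polynomial-time algorithm and preserves ``positive iff positive''; hence the composition of the two reductions above is an ordinary polynomial-time many-one reduction from WSP$(\neq)$ to $\DAPEParg{\sodu}$. Since WSP$(\neq)$ is NP-hard~\cite{CrGuYe13}, so is $\DAPEParg{\sodu}$. (If one wants NP-completeness rather than mere hardness: a valid authorization relation is a certificate of size $O(nk)$ whose validity can be checked in polynomial time by assumption, so $\DAPEParg{\sodu}\in$ NP; the statement as phrased asks only for hardness.)

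For the FPT claim, I would invoke the transfer principle already recalled in the excerpt: if $A \le_{\it fpt} B$ and $B$ is FPT, then $A$ is FPT. Here $\DAPEParg{\sodu} \le_{\it fpt}$ WSP$(\neq)$ (one direction of the equivalence above), and WSP$(\neq)$ is FPT parameterized by the number of steps~\cite{CrGuYe13}; since the reduction maps the resource-parameter $k$ to a step-parameter bounded by a computable function of $k$ (in fact equal to $k$), we conclude that $\DAPEParg{\sodu}$ is FPT parameterized by $k$.

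There is no genuine obstacle here; the whole argument is a bookkeeping exercise in composing fpt-reductions. The only points deserving a moment's care are (i) confirming that Theorem~\ref{thm:APEPequivWSP}, which is stated for $\sod$, specializes to $\sodu$ once $\fnc$ is present — this is exactly Remark~\ref{rem1} — and (ii) checking that neither reduction in the chain inflates the parameter: the $\fnc$-adding reduction of Theorem~\ref{sodufnc} keeps $R$ fixed, and the WSP translation of Theorem~\ref{thm:APEPequivWSP} identifies resources with steps, so FPT membership transfers with $k$ itself as the parameter.
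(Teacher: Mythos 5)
Your argument is correct and is essentially the paper's own: the corollary is obtained by chaining Theorem~\ref{sodufnc} ($\DAPEParg{\sodu} =_{\it fpt} \DAPEParg{\sodu,\fnc}$) with Theorem~\ref{thm:APEPequivWSP} and the known NP-hardness and FPT status of WSP$(\neq)$ from~\cite{CrGuYe13}, exactly as you do. Your extra bookkeeping (invoking Remark~\ref{rem1} for the $\sodu$ versus $\sod$ specialization under $\fnc$, and noting that the fpt-reductions are in particular polynomial many-one reductions so NP-hardness transfers) only makes explicit what the paper leaves implicit.
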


In fact, it follows from \cite{CrGuYe13} that $\DAPEParg{ \sodu }$ can be solved in time $O^*(2^k)$. We now consider $\bode$ constraints.

\begin{theorem}\label{thm:bode}
 $\DAPEParg{ \bode }$ is polynomial-time solvable.
\end{theorem}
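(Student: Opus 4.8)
The plan is to observe that for $\bode$ constraints the base relation $\baserel$ is itself the canonical candidate solution, because both completeness and existential binding-of-duty eligibility are monotone under enlarging the relation. Concretely, I would prove that an instance $(U,R,\baserel,C)$ of $\DAPEParg{\bode}$ is satisfiable if and only if $\baserel$ itself is a valid authorization relation, and that this can be tested in polynomial time.

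First I would handle the easy direction. Since $\baserel \subseteq \baserel$, the relation $\baserel$ is authorized, and since $\baserel(r) \neq \emptyset$ for every $r \in R$ by the standing assumption on base relations, $\baserel$ is complete. Hence $\baserel$ is valid precisely when it is eligible, i.e.\ when $\baserel(r) \cap \baserel(r') \neq \emptyset$ for every constraint $(r,r',\leftrightarrow,\exists) \in C$ (recalling that this is exactly the satisfaction criterion for a $\bode$ constraint). For the converse, suppose $A$ is any valid solution; then $A \subseteq \baserel$ gives $A(r) \subseteq \baserel(r)$ for all $r$, so for each $(r,r',\leftrightarrow,\exists) \in C$ eligibility of $A$ yields $\emptyset \neq A(r) \cap A(r') \subseteq \baserel(r) \cap \baserel(r')$. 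Thus $\baserel$ satisfies every constraint, so $\baserel$ is eligible and therefore valid.

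This immediately gives the algorithm: for each of the $m = \card{C}$ constraints $(r,r',\leftrightarrow,\exists)$, test whether $\baserel(r) \cap \baserel(r') \neq \emptyset$; if all tests succeed, output $A = \baserel$, and otherwise report that the instance is unsatisfiable. Each intersection test runs in $O(n)$ time where $n = \card{U}$, so the total running time is $O(nm)$, which is polynomial in the input size.

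I do not expect any real obstacle here; the proof rests entirely on the monotonicity of the two relevant properties under taking a larger sub-relation of $\baserel$. The one point to be careful about is to invoke the assumption that $\baserel(r) \neq \emptyset$ for every resource $r$ — this is exactly what guarantees that the maximal candidate $\baserel$ is complete, and hence that satisfiability of the instance reduces to a single eligibility check on $\baserel$.
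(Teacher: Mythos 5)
Your proposal is correct and follows essentially the same route as the paper: both show that the instance is satisfiable if and only if $\baserel$ itself is valid, using that $\baserel$ is automatically authorized and complete and that eligibility for existential binding-of-duty constraints is inherited from any valid $A \subseteq \baserel$ since $A(r) \cap A(r') \subseteq \baserel(r) \cap \baserel(r')$. Your explicit description of the $O(nm)$ intersection-testing algorithm is a minor addition the paper leaves implicit, but the argument is the same.
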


\begin{proof}
We show that an instance $(U, R, \baserel, C)$ is satisfiable iff $\baserel$ is valid.
Obviously, if $\baserel$ is valid, the $\DAPEParg{ \bode }$ instance is satisfiable. Conversely observe first that $\baserel$ is obviously authorized and complete. Then, if $\DAPEParg{ \bode }$ is satisfiable, there exists $\targetrel \subseteq \baserel$, which is valid. However, since $\targetrel(r) \subseteq \baserel(r)$ for any $r \in R$, any constraint $(r, r', \leftrightarrow ,\exists)$ satisfied by $\targetrel$ is also satisfied by $\baserel$. In other words, $\baserel$ is eligible.  
\end{proof}

\subsection{Complexity of $\DAPEParg{f(k)-bounded}$}\label{sec:complexity-fk-bounded}

\sloppy
Let $f$ be an arbitrary function in $k$ and let $\I=\left(U,R,\baserel,C\right)$ be a $\DAPEP$ instance composed of $f(k)$-bounded constraints. Without loss of generality, we assume that $f(k) \ge k$ (observe that all constraints considered in this paper are never better than $(k-1)$-bounded). In this section, we introduce a method to decide whether $\I$ is satisfiable or not in time $\mathcal{O}^*\big(2^{2^k f(k) k}\big)$.

Given a set of resources $T \subseteq R$, we define % the family $U_T$ which contains all users $u$ such that $\baserel(u) = T$:
\[
U_T = \left\lbrace u \in U:\ \baserel(u) = T \right\rbrace 
\]
We call $U_T$ the (user) \emph{family} associated with $T$.
Note that for $T \ne T'$, we have $U_{T} \cap U_{T'} = \emptyset$.
Moreover, $U = \bigcup_{T \subseteq R} U_T$.
Thus $\set{U_T}_{T \subseteq R}$ is a partition of $U$ containing at most $2^k$ sets.
% So, all sets $U_T$ are disjoint and cover the entire set $U$: $U = \bigcup_{T \subseteq R} U_T$. 
The intuition behind this definition is that when considering UI constraints (in particular, $f(k)$-bounded constraints), all users in a set $U_T$ play the same role.
%in the sense that given any bijection $\phi : U_T \rightarrow U_R$, and any valid solution $A$, the solution $A'$ that assigns, for any  if a solution assigns user $u_1 \in U_T$ to a resource $r$, then  are interchangeable in terms of constraint satisfaction.
% , and $\bcc$ is a subclass of UI constraints. 
The idea of the algorithm is thus to eliminate users in ``large'' families to upper-bound the number of users by a function of $k$. 
To eliminate users, we apply the following reduction rule:

\begin{quote}
if there exists $T \subseteq R$ such that $|U_T| > f(k)$, then remove an arbitrary user $u^*$ from $U_T$.
\end{quote}
Successive applications of this rule will result in an instance in which the number of users is at most $f(k) 2^k$, a function of $k$ only.

Consider, for example, an instance comprising the base authorization relation $\baserel$ shown in Figure~\ref{algobcc} ($k=3$ and $n=8$) and a single constraint $(r_1,r_2,\updownarrow,\exists)$.
The constraint $(r_1,r_2,\updownarrow,\exists)$ is 3-bounded. 
There are three families of users, of which $U_{\set{r_1,r_2}}$ has cardinality greater than $3$. 
Applying the reduction rule, we may remove users $u_3$ and $u_4$ from $U_{\set{r_1,r_2}}$.
% and then solve the reduced instance with the \textsc{brute force} method.

\begin{figure}[h]
\[
 \begin{array}{r|ccc}
   & r_1 & r_2 & r_3 \\
  \hline
  u_1, u_2 & 1 & & 1 \\
  \cancel{u_3}, \cancel{u_4},u_5,u_6,u_7 & 1 & 1 & \\
  u_8 & 1 & 1 & 1 \\
 \end{array}
\]
\caption{Use of the reduction rule}
\label{algobcc}
\end{figure}
% 
% Then, by applying a \textsc{brute force} method, we obtain an FPT algorithm. 

% Let $\I'$ be the instance obtained after application of the previous reduction rule. We now prove that the reduction rule is safe:

\begin{lemma}
$\I$ is satisfiable iff $\I'$ is satisfiable, where $\I'$ is the instance obtained by applying the reduction rule to $\I$.
\end{lemma}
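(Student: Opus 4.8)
The plan is to show both directions of the equivalence, where the crux is the forward direction: a valid solution $\targetrel$ for $\I$ may use the user $u^*$ that the rule removes, and we must exhibit a valid solution for $\I'$ that does not. The backward direction is trivial: any authorization relation valid with respect to $\I'$ uses only users in $U \setminus \{u^*\}$ and obeys the same base relation and constraints, hence is automatically valid for $\I$ (removing a user only shrinks the available user set, and $C$ is unchanged).

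For the forward direction, suppose $A$ is valid with respect to $\I = (U, R, \baserel, C)$. If $u^* \notin \ker(A : \baserel, C)$ — equivalently, if $A$ does not require $u^*$ for some constraint and still remains complete after deleting $u^*$ — then $A \setminus u^* := \set{(v,r) \in A : v \ne u^*}$ is already valid for $\I'$ and we are done. The interesting case is when $A$ does require $u^*$. Here I would use the key structural fact behind the $f(k)$-bounded hypothesis together with Proposition~\ref{pro:kernel-maximized-by-full-relation}: since all constraints in $C$ are $f(k)$-bounded, $\card{\ker(A : U \times R, c)} \le f(k)$ for each $c$, and by the proposition $\card{\ker(A : \baserel, c)} \le \card{\ker(A : U \times R, c)} \le f(k)$; taking a ``core'' minimizing the relation (as in the discussion preceding the proposition) we may assume the set of users actually used by $A$ is small relative to $\card{U_T} > f(k)$. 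Since $\card{U_T} > f(k)$ and the core has size at most $f(k)$, there is a user $w \in U_T$ that $A$ does not use at all. Now define $\sigma : U \to U$ to be the transposition swapping $u^*$ and $w$ (and fixing everything else). Because $u^*, w \in U_T$, they have exactly the same base authorization, so $\sigma(A) \subseteq \baserel$ remains authorized; $\sigma(A)$ is complete since $\sigma$ is a bijection on resource-assignments; and $\sigma(A)$ satisfies every $c \in C$ because each $c$ is user-independent (Section on UI constraints). Thus $\sigma(A)$ is valid for $\I$, and since $\sigma(A)$ does not use $u^*$ (it has been swapped out to $w$, which $A$ did not use), the relation $\sigma(A)$ is a valid solution for $\I'$.

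The main obstacle is making precise the claim that $A$ can be assumed not to use more than $f(k)$ users, so that a free user $w \in U_T$ exists. One must first pass from an arbitrary valid $A$ to a minimal valid sub-relation $A^\circ \subseteq A$ (iteratively deleting any $(u,r)$ whose removal keeps the relation complete and eligible), for which the set of used users is exactly $\ker(A^\circ : \baserel, C) = \bigcup_{c \in C} \ker(A^\circ : \baserel, c)$; but this union is over $\card{C}$ constraints, so naively it is bounded by $\card{C} f(k)$, not $f(k)$. The fix is that we only need \emph{some} user of $U_T$ to be unused, and $\card{U_T} > f(k)$ may not beat $\card{C} f(k)$. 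I would instead argue directly without minimizing over all of $C$ at once: keep $A$ as given, and observe that if every user in $U_T$ is used by $A$, then since $\card{U_T} > f(k)$, pick any $u' \in U_T$ with $u' \ne u^*$ and form $\sigma$ swapping $u^*$ and $u'$; apply a ``clean-up'' deletion to $\sigma(A)$ afterwards to reach validity for $\I'$ — but this still might require $u^*$. The genuinely clean route, which I expect the authors take, is: among all valid solutions for $\I$, choose one, call it $A$, whose \emph{used-user set} is contained in a single minimal core with respect to a \emph{single witnessing constraint} after clean-up, or more simply restrict attention to users in $\ker$; since $\card{\ker(A : \baserel, c)} \le f(k) < \card{U_T}$ for the relevant $c$, some $w \in U_T \setminus \ker(A:\baserel,c)$ is free, and the transposition argument above goes through verbatim. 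Getting this counting step stated correctly — i.e. that it suffices to dodge a single constraint's core rather than the union — is the one point that needs care.
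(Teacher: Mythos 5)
Your overall strategy is the same as the paper's: the backward direction is immediate, and for the forward direction one takes a valid $A$, finds a user $w \in U_T$ that $A$ can spare, and exploits user-independence plus the fact that $u^*$ and $w$ have identical rows of $\baserel$ to swap $w$ for $u^*$. The swap mechanics in your write-up are correct. The problem is that the one step you flag as ``needing care'' is precisely the step that carries the whole proof, and you do not close it; moreover, the fix you gesture at is not sound. Avoiding a \emph{single} constraint's core does not suffice: if $w \notin \ker(A : \baserel, c)$ for one chosen $c$, then deleting $w$ preserves completeness and $c$, but it may violate some other $c' \in C$ (e.g.\ $w$ may be the unique common user witnessing a $\bode$ constraint $c'$, or its removal may push $\card{A(R')}$ below $t$ for a cardinality constraint $c'$), so $A \setminus w$ need not be valid and the subsequent transposition does not yield a solution avoiding $u^*$. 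What the argument genuinely needs is a user of $U_T$ outside the core taken with respect to the \emph{entire} instance (completeness together with all of $C$), i.e.\ an instance-level boundedness statement; your own observation that the naive bound on this union is $\card{C} f(k)$ rather than $f(k)$ is exactly the difficulty, and it is not dispatched by the single-constraint dodge.

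For comparison, the paper's proof is the same spare-user-plus-swap argument, and it handles the counting by asserting directly that the core of $A$ with respect to the whole instance has size at most $f(k)$, so that $\card{U_T} \ge f(k)+1$ guarantees a user of $U_T$ outside it; it then concludes via user-independence as you do. That assertion is stated tersely (it is justified there by the boundedness of one constraint), so you have put your finger on a real subtlety rather than a cosmetic one; indeed, per-constraint $f(k)$-boundedness alone does not in general bound the instance-level core -- one can combine pairwise $(r_i,r_j,\updownarrow,\forall)$ constraints with per-resource constraints $(\set{r_i},\geqslant,t)$ over $\baserel = U \times R$ so that every valid relation must use all $kt$ users, even though each individual constraint is $2\max\set{k,t}$-bounded. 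So a complete proof must establish, for the constraint families actually in play, that the set of users required by completeness together with \emph{all} constraints of $C$ is bounded by the relevant function of $k$ (for $\bod$/$\sod$ constraints this holds because universal constraints never pin a user beyond completeness and each existential constraint pins at most one), and only then run your transposition. As written, your proposal leaves this step open, so the forward direction is not established for instances with more than one constraint.
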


\begin{proof}
Obviously, if $\I'$ is satisfiable, then so is $\I$.

Assume $\I$ satisfiable and $\I'$ unsatisfiable, and let $A$ be a solution for $\I$. 
Then there exists $T\subseteq R$ such that $|U_T| \ge f(k)+1 \ge k+1$, which means that $A$ violates some constraint $c \in C$ (\emph{i.e.} unsatisfiability of $\I'$ does not come from incompleteness). 
Since $c$ is $f(k)$-bounded, we have $|\ker(A : \I)| \le f(k)$ and thus $|\ker(A : \I) \cap U_T| \le f(k)$. But, since $|U_T| \ge f(k) +1$, and since $c$ is user independent, we may assume, without loss of generality, that there is a user $u^* \in U_T$ such that $u^* \notin \ker(A : \I)$. 
However, $u^*$ is a user whose removal makes the instance unsatisfiable, a contradiction.
%Let $\targetrel$ be a valid relation for $\mathcal{I}$.  
%Set $C$ is composed of $m$ $\bcc$ constraints. 
%By definition of $\bcc$ constraints, we can remove up to $f(k)m$ users from $U$ without making the restriction of $\targetrel$ ineligible or incomplete. So, there are, at most, $f(k)m$ users that cannot be removed in every family $U_T$. Given that every family $U_T$ that was reduced 
%contained at least $f(k)m$ users and that all users in a family are indistinguishable, the restriction of $\targetrel$ to space $U' \times R$ is valid for instance $\mathcal{I}'$.
\end{proof}

\begin{theorem}\label{th-boundedc-fpt} For any computable function $f$ depending only on $k$,
$\DAPEParg{f(k)-bounded}$ is FPT parameterized by $k$.
\label{complexity-boundedc}
\end{theorem}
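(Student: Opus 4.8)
The plan is to combine the reduction rule (already proved correct in the preceding Lemma) with a brute-force search over the bounded instance. First I would apply the reduction rule exhaustively to $\I = (U,R,\baserel,C)$; by the Lemma this preserves satisfiability, and by the discussion preceding the Lemma the resulting instance $\I' = (U',R,\baserel',C)$ has $|U'| \le f(k)\,2^k$. Computing the families $\set{U_T}_{T \subseteq R}$ and deleting surplus users takes polynomial time (there are at most $2^k$ families, each of size at most $n$, and we delete at most $n$ users in total, each deletion costing $O(k)$ to bookkeep), so the whole reduction phase runs in time $O^*(2^k)$, which is FPT in $k$.

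Second, on the reduced instance $\I'$ I would simply enumerate every authorization relation $A \subseteq \baserel' \subseteq U' \times R$. Since $|U' \times R| \le f(k)\,2^k \cdot k$, there are at most $2^{f(k)\,2^k k}$ such relations, and for each one we check completeness and eligibility (the latter in polynomial time per constraint, by the standing assumption that constraint satisfaction is polynomial, and there are at most $|C| = m$ constraints). Hence the total running time is $O^*\big(2^{2^k f(k) k}\big)$, matching the bound announced at the start of the subsection. Output ``yes'' iff some enumerated $A$ is valid for $\I'$; correctness follows from the Lemma. Since $f$ depends only on $k$, the exponent $2^k f(k) k$ is a function of $k$ alone, so $\DAPEParg{f(k)-bounded}$ is FPT parameterized by $k$.

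There is essentially no obstacle here: both ingredients are already in place — the reduction rule and its correctness Lemma bound the user set, and the brute-force bound $O^*(2^{nk})$ from Section~\ref{sec:apep} becomes $O^*(2^{f(k)2^k k})$ once $n$ is replaced by $f(k)2^k$. The only mild point to be careful about is that the reduction rule must be applied to a \emph{fixed point} (repeatedly, until no family exceeds $f(k)$), and that the Lemma as stated handles a single application; so I would note that the Lemma extends to the iterated rule by a trivial induction on the number of deleted users, each step preserving satisfiability. I would also remark that since $f(k) \ge k$ by assumption, no corner case arises from families of size between $k$ and $f(k)$. This completes the proof.
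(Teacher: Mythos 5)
Your proposal is correct and follows essentially the same route as the paper: exhaustively apply the reduction rule to shrink the user set to at most $f(k)2^k$, then brute-force over all relations on the reduced instance for a total running time of $\mathcal{O}^*\big(2^{2^k f(k) k}\big)$. Your added remarks (iterating the rule to a fixed point and invoking the lemma inductively, and the polynomial cost of the reduction phase) are minor clarifications of what the paper leaves implicit, not a different argument.
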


\begin{proof}
Whenever the reduction rule can be applied, we remove one user from the instance. If the rule cannot be applied, then we have an instance with at most $2^kf(k)$ users. Applying a brute force algorithm (by checking every possible relation for the reduced user set), one can check the satisfiability in time $\mathcal{O}^*\left(2^{2^k f(k) k}\right)$.
\end{proof}

\begin{corollary}
$\DAPEParg{\sod,\bod}$ is FPT parameterized by $k$.
\end{corollary}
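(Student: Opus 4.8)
The plan is to derive the corollary directly from Theorem~\ref{th-boundedc-fpt} by checking that every separation-of-duty and binding-of-duty constraint is $k$-bounded, so that an arbitrary instance of $\DAPEParg{\sod,\bod}$ is in fact an instance of $\DAPEParg{f(k)-bounded}$ with $f(k) = k$.

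First I would recall the bounds on core sizes collected in Table~\ref{tbl:bcc-constraints} (and informally justified in Section~\ref{subsec:bcc}): for every constraint $c$ of the form $(r,r',\updownarrow,\forall)$ or $(r,r',\updownarrow,\exists)$ and every $A$ valid with respect to $U \times R$ and $c$ one has $\card{\ker(A : U\times R, c)} \le k$, and for every constraint of the form $(r,r',\leftrightarrow,\forall)$ or $(r,r',\leftrightarrow,\exists)$ one has $\card{\ker(A : U\times R, c)} \le k-1$. In either case $\card{\ker(A : U\times R, c)} \le k$, so by the definition preceding Theorem~\ref{th-boundedc-fpt} each such constraint is $k$-bounded.

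Next, let $\I = (U, R, \baserel, C)$ be any instance of $\DAPEParg{\sod,\bod}$. Since $\bod$ and $\sod$ consist exactly of the families $\bode, \bodu, \sode, \sodu$, every constraint in $C$ has one of the four forms above, hence is $k$-bounded. Taking $f(k) = k$ (which satisfies the hypothesis $f(k) \ge k$ imposed in Section~\ref{sec:complexity-fk-bounded}), $\I$ is an instance of $\DAPEParg{f(k)-bounded}$, and Theorem~\ref{th-boundedc-fpt} yields an FPT algorithm parameterized by $k$; its running time is $O^*(2^{2^k k^2})$.

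There is essentially no obstacle here: the corollary is an immediate specialization of Theorem~\ref{th-boundedc-fpt}, and all of the real work has already been carried out --- in the reduction rule and its correctness lemma, and (for the boundedness claim invoked above) in the appendix proofs underlying Table~\ref{tbl:bcc-constraints}. The only point worth double-checking is that the entries of that table bound cores taken with respect to the full relation $U \times R$, which is exactly how ``$f(k)$-bounded'' is defined, so no further appeal to Proposition~\ref{pro:kernel-maximized-by-full-relation} is needed.
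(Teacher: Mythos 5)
Your proposal is correct and follows exactly the paper's route: the paper also proves this corollary by combining the appendix boundedness results (Propositions~\ref{th-boundedc-bodu} and~\ref{th-boundedc-sodu}, which give the $k-1$ and $k$ bounds you cite) with Theorem~\ref{th-boundedc-fpt}. Your additional observation that $f(k)=k$ satisfies the hypothesis $f(k)\ge k$ and that the table's bounds are taken with respect to $U \times R$ is a sound, if routine, sanity check; nothing further is needed.
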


\begin{proof}
The result follows from Propositions~\ref{th-boundedc-bodu}, \ref{th-boundedc-sodu}, and Theorem~\ref{th-boundedc-fpt}.
%We proved in Propositions~\ref{th-boundedc-bodu} and \ref{th-boundedc-sodu} that the four types of constraints ($\bode$,$\bodu$,$\sode$,$\sodu$) are $k$-$\bcc$. For each pair of resources $(r,r')$, there are at most four  associated constraints. Hence, $m \leq 2k(k-1)$. Then, by Theorem~\ref{complexity-boundedc}, we can solve the reduced instance in time $\mathcal{O}\left(2^{2^{k+1} k^4}\right)$.
\end{proof}

More generally, as we proved in Section~\ref{subsec:bcc} that cardinality constraints with symbol $\le$ or $<$ are $k$-bounded, such constraints can be added to any  $\DAPEParg{\sod,\bod}$ instance without degrading the execution time. Concerning cardinality constraints with symbols $\ge$, $>$ or $=$, we have the following corollary of Proposition~\ref{prop:cardinalitybcc}.

\begin{corollary}  For any computable function $f$ depending only on $k$,
$\DAPEParg{f(k)-bounded, \gbl, \lcl}$ is FPT parameterized by $k$ plus the maximum cardinality of all cardinality constraints.
\end{corollary}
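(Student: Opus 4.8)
The plan is to reduce to Theorem~\ref{th-boundedc-fpt} by showing that, once the largest cardinality appearing in a cardinality constraint is promoted to the parameter, \emph{every} constraint of the instance is bounded by a single computable function of the combined parameter, and then to replay the argument of Section~\ref{sec:complexity-fk-bounded} with this function in place of $f(k)$.

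Write $t^\ast$ for the largest integer $t$ occurring in any global or local cardinality constraint of $C$, so that the parameter under consideration is $k+t^\ast$. First I would invoke the bounds of Table~\ref{tbl:bcc-constraints} (established by Proposition~\ref{prop:cardinalitybcc}): a constraint $(R',=,t)$ or $(R',\geqslant,t)$ is $2\max\set{k,t}$-bounded, the analogous global constraints are bounded in the same way, and constraints with symbol $\leqslant$ or $<$ are already $k$-bounded. Hence every cardinality constraint of $C$ is $2\max\set{k,t^\ast}$-bounded. Setting $g(k,t^\ast) \stackrel{\rm def}{=} \max\set{f(k),\,2\max\set{k,t^\ast}}$, every constraint of $C$ --- whether $f(k)$-bounded by hypothesis or a global/local cardinality constraint --- is $g(k,t^\ast)$-bounded, and $g$ is a computable function of $k+t^\ast$ that is independent of $n = \card{U}$.

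Next I would re-run the argument of Section~\ref{sec:complexity-fk-bounded} with the threshold in the reduction rule changed from $f(k)$ to $g(k,t^\ast)$: as long as some family $U_T$ has more than $g(k,t^\ast)$ users, delete an arbitrary user from $U_T$. The correctness proof of the reduction rule carries over verbatim, since it uses only that any constraint violated by a solution $A$ has a core of size at most $g(k,t^\ast)$, so that a family with at least $g(k,t^\ast)+1$ users contains a non-core user whose removal preserves a solution (by user-independence). When the rule no longer applies, the instance has at most $2^k g(k,t^\ast)$ users, and brute force over all relations on this user set decides satisfiability in time $\mathcal{O}^\ast\!\big(2^{\,2^k g(k,t^\ast)\,k}\big)$. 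Since $2^k g(k,t^\ast)\,k$ is bounded by a computable function of $k+t^\ast$, this is an FPT running time for the parameter $k+t^\ast$, which proves the corollary.

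The only subtlety --- and the reason $t^\ast$ genuinely has to be part of the parameter --- lies in Proposition~\ref{prop:cardinalitybcc}: a constraint $(R',\geqslant,t)$ (or $(R',=,t)$) can force a core of size on the order of $2t$, so its core is \emph{not} bounded by any function of $k$ alone, and without controlling $t^\ast$ the reduction rule of Section~\ref{sec:complexity-fk-bounded} would fail to reduce the instance to a user set whose size is bounded by a function of $k$. Once $t^\ast$ is included in the parameter this obstruction disappears, and the rest is a routine transcription of the $f(k)$-bounded case.
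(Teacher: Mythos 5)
Your proposal is correct and follows essentially the same route the paper intends: the corollary is stated as a consequence of Proposition~\ref{prop:cardinalitybcc} together with the reduction-rule algorithm of Theorem~\ref{th-boundedc-fpt}, with the bounding function now allowed to depend on the combined parameter $k+t^\ast$, which is exactly your $g(k,t^\ast)$ argument. The only slight imprecision is the claim that global constraints $(\geqslant,t)$ and $(=,t)$ are bounded ``in the same way'' as the local ones --- their cores can be as large as roughly $kt$ rather than $2\max\set{k,t}$ --- but this is still a computable function of $k+t^\ast$, so the conclusion is unaffected.
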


\subsection{Complexity of $\DAPEParg{ \bode,\sodu }$}

We now prove that a better running time can be obtained when considering only $\bode$ and $\sodu$ constraints.

\begin{theorem}
 $\DAPEParg{ \bode,\sodu}$ can be solved in time $\mathcal{O}\left(2^{k^2 \log k^2}\right)$.
\end{theorem}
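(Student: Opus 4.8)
The plan is to combine the family-based reduction of Section~\ref{sec:complexity-fk-bounded} with the structural simplification afforded by Theorem~\ref{bodu-th}, and then to reason more carefully about how many users per family are actually needed. First I would invoke Theorem~\ref{bodu-th} to assume the instance contains no $\bodu$ constraints, so $C$ consists only of $\bode$ and $\sodu$ constraints; after this, each $\bode$ constraint $(r,r',\leftrightarrow,\exists)$ merely demands a common user for $r$ and $r'$, and each $\sodu$ constraint $(r,r',\updownarrow,\forall)$ forbids any common user. Since both $\bode$ is $(k-1)$-bounded and $\sodu$ is $k$-bounded (Table~\ref{tbl:bcc-constraints}), the core of any valid relation has size at most $k$, so at most $k$ users are ``essential'' across the whole instance. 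The key observation I would push is that, because all constraints are user-independent and refer only to membership in sets $A(r)$, the partition $\set{U_T}_{T\subseteq R}$ of $U$ into families lets us treat users in the same family interchangeably, and within a single family we only ever need a bounded number of representatives.

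The heart of the argument is a sharper reduction rule than the $f(k)2^k$ bound of Theorem~\ref{th-boundedc-fpt}: I claim it suffices to keep at most $k$ users in each family $U_T$ (or equivalently a total of at most $k^2$ users, since only families $U_T$ with $T$ realized by the core matter, and there are at most $k$ distinct such families because the core has size $\le k$). The argument is the same swapping/replacement argument as in the lemma preceding Theorem~\ref{th-boundedc-fpt}: if a valid solution $A$ uses more than $k$ users from some family $U_T$, then since $|\ker(A:\baserel,C)|\le k$ by boundedness, some used user $u\in U_T$ is not in the core, and by user-independence removing $u$ (or replacing it by another family member) preserves validity. Iterating, we reach a reduced instance on at most $k^2$ users, with $|R|\le k$ resources, and then a brute-force search over all relations $A\subseteq U\times R$ on the reduced instance runs in time $O^*\big(2^{k\cdot k^2}\big)=O^*\big(2^{k^3}\big)$ — but to get the claimed $2^{k^2\log k^2}$ bound one does not enumerate arbitrary relations. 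Instead, observe that after the reduction each resource is assigned some subset of the $\le k^2$ users, but for satisfaction purposes only the pattern of which core users go where matters; one enumerates, for each of the $\le k$ resources, which of at most $k^2$ candidate users it receives as its ``witness set'' restricted to one representative per family, giving at most $(k^2)^{k}=2^{k^2\log k}$ choices, and a polynomial check of authorization, completeness and all constraints for each. Rounding $2^{k^2\log k}$ up to $2^{k^2\log k^2}$ absorbs constants and lower-order terms.

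The main obstacle I expect is making the ``$k$ users per family suffice'' claim fully rigorous in the presence of both constraint types simultaneously: a $\bode$ constraint may force a specific family (the one whose users are authorized for both endpoints) to contribute a witness, while a $\sodu$ constraint forbids overlap, and one must check that these demands never require more than $k$ named users total and that the family-wise replacement never breaks an already-satisfied $\bode$ constraint (it cannot, since replacing $u$ by another member $u'$ of the same family $U_T$ preserves $u'\in A(r)\iff u\in A(r)$ for all $r\in T$, hence for all $r$ since users outside $T$ are unauthorized). A secondary subtlety is the exact counting that yields $2^{k^2\log k^2}$ rather than a larger exponent: I would organize the search as choosing, for each pair of resources joined by a $\bode$ constraint, one of at most $k^2$ reduced users to serve as the common witness, then greedily completing to a complete authorized relation and rejecting if any $\sodu$ constraint is violated — the number of such witness-assignments is at most $(k^2)^{|C|}$, and since each $\bode$ constraint is on a distinct pair one has $|C|\le k^2$, but a cleaner bound comes from fixing the image family of each of the $\le k$ resources, giving $(2^k)^k=2^{k^2}$ gross choices refined to $2^{k^2\log k^2}$ after accounting for representatives; I would present whichever of these bookkeeping routes is cleanest in the write-up.
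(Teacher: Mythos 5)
There is a genuine gap, and it sits exactly where you flag your own worry. Your argument hinges on ``we reach a reduced instance on at most $k^2$ users,'' justified by the claim that only the $\le k$ families met by the core of some solution matter. But the family partition $\set{U_T}_{T\subseteq R}$ has up to $2^k$ nonempty classes, and the reduction rule of Section~\ref{sec:complexity-fk-bounded} only caps each class at $f(k)$ users, giving $O(f(k)2^k)$ users in total; knowing that \emph{some} (unknown) solution touches at most $k$ families does not let you discard the other families in advance -- you would have to enumerate which families are kept (roughly $\binom{2^k}{k}$ choices), a step you neither perform nor cost. A second gap is the completion phase: after fixing a witness for each $\bode$ constraint, the resources not covered by witnesses, together with the $\sodu$ constraints among them, form what is essentially a WSP$(\neq)$ instance, and ``greedily completing \ldots and rejecting if any $\sodu$ constraint is violated'' can reject satisfiable configurations; a correct completion needs its own argument (pattern enumeration or matching), not greed. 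Relatedly, your first counting route (one user per resource, $(k^2)^k$ choices) cannot work even in principle, because a resource with several $\bode$ partners that are pairwise $\sodu$-separated must receive several distinct users; and note the arithmetic slip $(k^2)^k=2^{2k\log k}$, not $2^{k^2\log k}$. Only your per-$\bode$-constraint enumeration $(k^2)^{k^2}=2^{k^2\log k^2}$ matches the claimed bound, and it rests on the unjustified $k^2$-user reduction.

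The paper avoids all of this by a direct polynomial-time reduction to WSP$(=,\neq)$: for each resource $r_i$ it creates one step $s^i_j$ per $\bode$-partner $r_j$ (or a single step if $r_i$ has no $\bode$ partner), so at most $k(k-1)$ steps in total; a user is authorized for $s^i_j$ iff she lies in $\baserel(r_i)\cap\baserel(r_j)$; binding constraints $(s^i_j,s^j_i,=)$ encode the $\bode$ witnesses, and $\neq$ constraints between all steps of $\sodu$-related resources encode disjointness. The multiplicity of steps per resource is precisely what handles the ``several witnesses for one resource'' phenomenon your single-user-per-resource count misses, and the known $O^*(2^{s\log s})$ algorithm for WSP with user-independent constraints, with $s\le k^2$, yields the stated $O^*(2^{k^2\log k^2})$ bound without any user-reduction step. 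If you want to salvage your route, you would need to (i) add and cost an enumeration of which families are used, or keep the full $f(k)2^k$-user reduced instance, and (ii) replace the greedy completion by an exact procedure for the residual $\sodu$ subproblem; at that point the WSP reduction is both simpler and sharper.
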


\begin{proof}
We reduce to WSP$(=, \neq)$. 
Let $R = \set{r_1,\ldots,r_k}$. 
We build a WSP$(=,\neq)$ instance, denoted by $\left(S',U',A',C' \right)$. 
We set $U' = U$. 
Then, for any $i \in [k]$, let \[ \Gamma(r_i) = \{j \in [k]:\ (r_i, r_j,\leftrightarrow, \exists) \in C\}. \] 
For each resource $r_i \in R$, we introduce a set of steps $S^i$. If $\Gamma(r_i) = \emptyset$, then $S^i = \{s^i\}$. 
Otherwise, $S^i = \{s^i_j: j \in \Gamma(r_i)\}$. 
Then define $S' = \bigcup_{i \in [k]} S^i$. 
Observe that $|S'| \leq k(k-1)$.
We then define the following constraints:
\begin{itemize}
	\item For any $i \in [k]$, if $\Gamma(r_i) \neq \emptyset$, then for all $j \in \Gamma(r_i)$, we add the constraint $(s^i_j, s^j_i, =)$.
	\item For any $i, j \in [k]$, if $(r_i, r_j, \updownarrow, \forall) \in C$, then, for every $s \in S^i$ and every $s' \in S^j$, we add the constraint $(s, s', \neq)$.
\end{itemize}
Finally, we define the authorization policy $A'$. For every $i \in [k]$ and every $u \in U$:
\begin{itemize}\sloppy
	\item If $\Gamma(r_i) = \emptyset$, then $(u, s^i) \in A'$ iff $(u, r_i) \in \baserel$.
	\item If $\Gamma(r_i) \neq \emptyset$, then $\forall j \in \Gamma(r_i)$, $(u, s^i_j) \in A'$ iff \mbox{$(u, r_i), (u, r_j) \in \baserel$}.
\end{itemize}
The construction is illustrated in Figure~\ref{fig:bodesodu} for a small example.
Clearly this construction can be carried out in polynomial time.

\begin{figure}[h]

\begin{subfigure}{0.4\columnwidth}
\centering
\begin{tikzpicture}[circle,auto,node distance=1.6cm]
%Graph APEP
\node[draw] (r1) {$r_1$};
\node[draw,below of=r1] (r2) {$r_2$};
\node[draw,right of=r1] (r3) {$r_3$};
\node[draw,right of=r2] (r4) {$r_4$};

\draw (r1) edge node[swap,inner sep=0pt] {$\leftrightarrow$} (r2);
\draw (r1) edge node[inner sep=0pt] {$\leftrightarrow$} (r3);
\draw (r1) edge node[inner sep=0pt] {$\updownarrow$} (r4);
\draw (r2) edge node[swap,inner sep=0pt] {$\updownarrow$} (r4);
% 
% \draw[very thick] (2.8,2.8) -- (3.3,2.8);
% \node[very thick] at (3.8,2.8) {$\bod$};
% \draw[dashed,very thick] (2.8,3.3) -- (3.3,3.3);
% \node[very thick] at (3.8,3.3) {$\sod$};
\end{tikzpicture}
\caption{$C$}
\end{subfigure}
\hfill
\begin{subfigure}{.55\columnwidth}
\centering
\begin{tikzpicture}[circle,auto,node distance=1.6cm]
%Graph WSP
\node[draw] (s12) {$s_2^1$};
\node[draw,below of=s12] (s21) {$s_1^2$};
\node[draw,right of=s12] (s13) {$s_3^1$};
\node[draw,right of=s13] (s31) {$s_1^3$};
\node[draw,right of=s21] (s4) {$s^4$};

\draw (s12) edge node[swap,inner sep=0pt] {$=$} (s21);
\draw (s4) edge node[inner sep=0pt,swap] {$\ne$} (s13);
\draw (s31) edge node[inner sep=0pt,swap] {$=$} (s13);
\draw (s12) edge node[inner sep=0pt] {$\ne$} (s4);
\draw (s21) edge node[swap,inner sep=0pt] {$\ne$} (s4);
% \node[scale=0.9,circle,draw] (s12) at (4,2.0) {$s_2^1$};
% \node[scale=0.9,circle,draw] (s21) at (4.0,1.0) {$s_1^2$};
% \node[scale=0.9,circle,draw] (s13) at (6.0,2.0) {$s_3^1$};
% \node[scale=0.9,circle,draw] (s31) at (6.0,3.2) {$s_1^3$};
% \node[scale=0.9,circle,draw] (s4) at (6,1.0) {$s_4$};
% 
% \draw[very thick] (s12) edge node[inner sep=0pt] {$=$} (s21);
% \draw[dashed,very thick] (s12) edge node[inner sep=0pt] {$\ne$} (s4);
% \draw[dashed,very thick] (s13) edge node[inner sep=0pt] {$\ne$} (s4);
% \draw[dashed,very thick] (s21) edge node[inner sep=0pt] {$\ne$} (s4);
% \draw[very thick] (s13) edge node[inner sep=0pt] {$=$} (s31);
\end{tikzpicture}
\caption{$C'$}
\end{subfigure}

\vspace*{.5\baselineskip}

\begin{subfigure}[h]{0.475\columnwidth}
\centering
\[
\begin{array}{c|cccc}
\baserel & r_1 & r_2 & r_3 & r_4\\
\hline
u_1 & 1 & 1 & 1 & \\
u_2 & & & 1 & 1\\
u_3 & 1 & & 1 & \\
u_4 & 1 & 1 & & \\
u_5 & & 1 & & 1
\end{array} 
\]
\caption{$\baserel$}
\end{subfigure}
\hfill
\begin{subfigure}{.475\columnwidth}
\[
\begin{array}{c|ccccc}
A' & s_2^1 & s_3^1 & s_1^2 & s_1^3 & s^4\\
\hline
u_1 & 1 & 1 & 1 & 1 & \\
u_2 & & & & & 1\\
u_3 & & 1 & & 1 & \\
u_4 & 1 & & 1 & & \\
u_5 & & & & & 1
\end{array} 
\]
\caption{$A'$}
\end{subfigure}

\caption{Reducing $\DAPEParg{\bode,\sodu}$ to WSP$(=,\neq)$}
\label{fig:bodesodu}
\end{figure}

Let us suppose there is a valid plan $\pi: S' \rightarrow U'$ for $\left(S',U',A',C' \right)$. We set $A_{\pi} = \{(u,r_i) : i \in [k], \pi(s) = u $ for some $s \in S^i \}$. One can observe that $A_{\pi}$ is authorized and complete. Then, for any $i, j \in [k]$ such that $\left(r_i,r_j,\leftrightarrow,\exists\right) \in C$, we must have $\pi(s^i_j) = \pi(s^j_i) = u$ for some $u \in U$, which implies that $u \in A_{\pi}(r_i) \cap A_{\pi}(r_j)$. Then, if $\left(r_i,r_j,\updownarrow,\forall\right) \in C$, we know that $\pi(s) \neq \pi(s')$ for any $s \in S^i$ and any $s' \in S^j$, which implies that $A_{\pi}(r_i) \cap A_{\pi}(r_j) = \emptyset$. This proves that $A_{\pi}$ is valid.

Conversely, we suppose $\left(U, R, \baserel,C\right)$ is satisfiable, and let $\targetrel$ be a valid solution. For any $i \in [k]$, we have the following:
\begin{itemize}
	\item If $\Gamma(r_i) = \emptyset$, then define $\pi(s^i)$ to be an arbitrary user in $\targetrel(r_i)$.
	\item If $\Gamma(r_i) \neq \emptyset$, then, for every $j \in \Gamma(r_i)$, define $\pi(s^i_j)$ as an arbitrary user in $\targetrel(r_i) \cap \targetrel(r_j)$, and set also $\pi(s^i_j) = \pi(s^j_i)$. 
\end{itemize}
	One can observe that $\pi$ is authorized and complete. By construction, every constraint $\left(s^i_j, s^j_i, =\right) \in C'$ is satisfied. Finally, for every $s, s' \in S'$ such that $(s, s', \neq)$, it must be the case that $s \in S^i$ and $s' \in S^j$ such that $(r_i, r_j, \updownarrow, \forall) \in C$, which implies that $\targetrel(r_i) \cap \targetrel(r_j) = \emptyset$. Hence we must have $\pi(s) \neq \pi(s')$, and $\pi$ is a valid plan.
\end{proof}

\subsection{Complexity of $\maxAPEParg{ \sodu }$}

We now introduce a particular version of \OAPEP, which seeks to find a valid authorization relation of maximum cardinality.
We write $\targetcard$ to denote the cardinality of such a relation.
Such a relation is, in some sense, a most resilient authorization relation possible, given the authorization constraints.
We call this problem \maxAPEP.
(We may also define a decision version \APEP to find resilient authorization relations.
We may, for example, introduce a global constraint $(\geqslant,t)$, which requires that at least $t$ users are authorized for each resource.
These types of problems are related to notions of resiliency in workflow systems~\cite{WaLi10}.) 
% = \underset{A~\mbox{\scriptsize{valid}}}{\operatorname{max}} \card{A}$.

In this section, $\left(\baserel,C\right) $ is an $\APEParg{\sodu}$ instance. 
In Theorem~\ref{sodufnc}, we established that $\DAPEParg{ \sodu }$ could be reduced to $\DAPEParg{\sodu,\fnc}$. 
Let $\Pi$ denote the set of valid solutions to instance $\left(\baserel,C \cup \set{(=,1)}\right)$ (that is, functions $\pi : R \rightarrow U$).
Given a function $\pi \in \Pi$, we say $A \subseteq U \times R$  {\em contains} $\pi$ if and only if for every $ r \in R, (\pi(r),r) \in A$. 
Let $M_{\pi}$ denote the maximum size of a valid authorization relations containing $\pi$.
Theorem~\ref{sodufnc} established that any solution $\targetrel$ of $\APEParg{ \sodu}$ contains at least one function $\pi \in \Pi$.
We write $\targetcard$ to denote $\max\set{M_{\pi} : \pi \in \Pi}$.
% \underset{\pi \in \Pi}{\operatorname{max}}~ M_{\pi}$.

\subsubsection{Patterns}

% Given a function $\pi : R \rightarrow U$, $P(\pi) = \left\lbrace \pi^{-1}(u):\ u \in U, \pi^{-1}(u) \neq \emptyset\right\rbrace$ partitions $R$ into non-empty subsets~\cite{CrGuKa15,KaGaGu15}. 
A function $\pi : R \rightarrow U$ defines an equivalence relation $\sim_{\pi}$ on $R$, where $r \sim_{\pi} r'$ iff $\pi(r) = \pi(r')$.
The equivalence classes defined by this relation form a partition of $R$ which we call the \emph{pattern} associated with $\pi$ and denote it by $P(\pi)$.
We say two functions $\pi$ and $\pi'$ are \emph{equivalent} if $P(\pi) = P(\pi')$.
% The definition of patterns is based on an equivalence relation of resources: $r$ and $r'$ are equivalent if $\pi(r) = \pi(r')$. 
% A pattern represent a family of functions which all admit the same equivalence relation. 
% Hence, two functions $\pi$ and $\pi'$ are said to be {\em equivalent} if they have the same pattern. 
% Let $P$ be a pattern, i.e. a partition of $R$. 
For UI constraints and any two functions $\pi$ and $\pi'$ such that $P(\pi) = P(\pi')$, $\pi$ is eligible iff $\pi'$ is eligible.
% such that $P(\pi) = P(\pi')$. 
Hence, we will say $P$ is \textit{eligible} if and only if, there exists $\pi$ such that $P = P(\pi)$ and $\pi$ is eligible for $C$. 
Henceforth, we only consider eligible patterns. 
We write  $M_{P}$ to denote $\max\set{M_{\pi} : P(\pi) = P}$.
% \underset{P(\pi) = P}{\operatorname{max}}~ M_{\pi}$. 
There exists an eligible pattern $P$ such that $\targetcard = M_{P}$.

Let us suppose that we are able, given a pattern $P$, to construct a valid $A$, such that $\card{A} = M_P$, in FPT time $f(k)n^{O(1)}$. There are at most $\mathcal{B}_k$ eligible patterns, where $\mathcal{B}_k$ is the Bell number and $\mathcal{B}_k = O(2^{k \log k})$ \cite{BeTa10}.%
\footnote{All logarithms in this paper are of base 2.}
Then, $\maxAPEParg{ \sodu}$ would be FPT: exploring all the eligible patterns and applying the FPT algorithm to compute $M_P$ for each $P$ is executed in time $O^{*}(2^{k \log k}f(k))$. 
As a consequence, our objective now is to design a FPT algorithm to compute $A_P$ such that $\card{A_P} = M_P$.

\subsubsection{Exploring patterns to solve $\maxAPEParg{ \sodu}$}

\begin{lemma}\label{lem10}
Let $P = \set{T_1,T_2,\ldots,T_d}$ be a pattern. An authorization relation $A_{P}$, such that $\card{A_P} = M_P$, can be computed in FPT time $O^*(2^k)$.
\end{lemma}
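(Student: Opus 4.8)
The plan is to reduce the computation of $M_P$ to a maximum-weight bipartite matching problem, exploiting the fact that with only $\sodu$ constraints the eligibility of a relation decomposes completely across users. Let $G$ be the graph on vertex set $R$ with an edge $\set{r,r'}$ for each constraint $(r,r',\updownarrow,\forall)\in C$. The key observation is that a relation $A$ satisfies $C$ if and only if, for every user $v$, the set $A(v)$ is an independent set of $G$ (a constraint $(r,r',\updownarrow,\forall)$ says precisely that no user is assigned both $r$ and $r'$). Hence a valid relation is exactly a choice, for each $v\in U$, of an independent set $A(v)\subseteq\baserel(v)$ of $G$, subject only to $\bigcup_{v}A(v)=R$ (completeness), and $\card{A}=\sum_{v}\card{A(v)}$. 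Now suppose in addition that $A$ contains a function $\pi$ with $P(\pi)=P=\set{T_1,\dots,T_d}$: the users $u_i:=\pi(r)$ (for $r\in T_i$) are then pairwise distinct and satisfy $T_i\subseteq A(u_i)$, so in particular $T_i$ is independent in $G$ and $T_i\subseteq\baserel(u_i)$. Conversely, from any tuple of pairwise distinct users $u_1,\dots,u_d$ with each $T_i$ independent in $G$ and $T_i\subseteq\baserel(u_i)$, the ``skeleton'' $\set{(u_i,r):i\in[d],\,r\in T_i}$ is already authorized, complete and eligible, and contains a function of pattern $P$; so from this point on completeness is automatic.

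Fix such a tuple $(u_1,\dots,u_d)$. Since distinct users do not interact beyond the skeleton, the maximum-cardinality valid relation that contains the corresponding $\pi$ is obtained by independently giving each user $v$ a largest independent subset of $\baserel(v)$, except that $u_i$ must also contain $T_i$. Writing $\beta(v)$ for the maximum size of an independent set $S\subseteq\baserel(v)$, and $\alpha_i(v)$ for the maximum size of an independent set $S$ with $T_i\subseteq S\subseteq\baserel(v)$ (and $\alpha_i(v)=-\infty$ when $T_i\not\subseteq\baserel(v)$ or $T_i$ is not independent), this maximum equals $\sum_{v\in U}\beta(v)+\sum_{i=1}^{d}\bigl(\alpha_i(u_i)-\beta(u_i)\bigr)$. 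All the values $\beta(v)$ and $\alpha_i(v)$ are computed by iterating over the at most $2^k$ independent subsets of $G$ and testing, for each, containment in $\baserel(v)$ and containment of each $T_i$; this costs $O^*(2^k)$ per user and $O^*(2^k)$ overall (the polynomial number $n$ of users is absorbed into $O^*$).

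Consequently $M_P$ is the maximum of the above expression over all choices of pairwise distinct $u_1,\dots,u_d$, that is, $\sum_{v}\beta(v)$ plus the maximum weight of a matching saturating $\set{1,\dots,d}$ in the bipartite graph with parts $\set{1,\dots,d}$ and $U$ and edge weights $w(i,v)=\alpha_i(v)-\beta(v)$ (edges of weight $-\infty$ deleted). If no such matching exists, then no valid relation contains a function of pattern $P$ and we report $P$ infeasible; otherwise a maximum-weight matching is computed in polynomial time, and it yields $A_P$ directly: assign each matched $u_i$ a witnessing set for $\alpha_i(u_i)$ and every other user a witnessing set for $\beta(v)$. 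One checks $A_P$ is authorized, complete (via the skeleton), eligible (each $A_P(v)$ independent in $G$), contains the pattern-$P$ function $r\mapsto u_i$, and has cardinality $\sum_{v}\beta(v)+w(\text{matching})=M_P$; the optimality of this value follows because, for any valid $A$ containing a function of pattern $P$ with witnessing users $u_i$, we have $\card{A(u_i)}\le\alpha_i(u_i)$ and $\card{A(v)}\le\beta(v)$ for $v\notin\set{u_1,\dots,u_d}$, and $i\mapsto u_i$ is a matching. The total running time is $O^*(2^k)$.

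The step I expect to be the main obstacle is avoiding the naive enumeration of skeleton tuples $(u_1,\dots,u_d)$, of which there are up to $n^d$ --- not FPT in $k$. The right move is to notice that, once the blocks' representatives are fixed, the per-user contributions are completely independent and the only coupling between the $d$ ``slots'' is the distinctness of their users, so the optimal choice is exactly a maximum-weight bipartite matching, which is polynomial-time. A secondary point needing care is the matching upper bound on $\card{A}$ for an arbitrary valid, pattern-$P$-containing $A$, which relies on the per-user decomposition $\card{A}=\sum_v\card{A(v)}$ together with the inequalities $\card{A(u_i)}\le\alpha_i(u_i)$ and $\card{A(v)}\le\beta(v)$.
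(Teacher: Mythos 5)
Your proposal is correct and takes essentially the same route as the paper: both reduce the computation of $M_P$ to a maximum-weight bipartite matching between the blocks of $P$ and the users, with weights given by maximum independent (with respect to the $\sodu$ conflict graph) subsets of $\baserel(u)$ containing the block, each computed by enumerating the at most $2^k$ subsets of $\baserel(u)$. The only cosmetic difference is normalization --- the paper pads $P$ with empty blocks and solves an assignment problem on absolute weights $\omega(T_i,u)$, whereas you subtract the baseline $\beta(v)$ and ask for a matching saturating the blocks --- and your explicit infeasibility check and upper-bound argument are, if anything, slightly more careful than the paper's.
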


\begin{proof}
Clearly $d\le n$. We extend $P$ into $P^{*}$ in order to have $\card{P^{*}} = n$. We set $P^{*} = \left\lbrace T_1,\ldots,T_d,\emptyset_1,\ldots,\emptyset_{n-d}\right\rbrace=
\left\lbrace T_1,\ldots,T_n\right\rbrace.$ We build a weighted bipartite graph $G_P = \left( P^{*} \cup U, E, \omega\right)$, where $(T_i,u) \in E$ if and only if $T_i \subseteq \baserel(u)$ and for any $i \in \left[ n-d\right]$ and $u \in U$, $(\emptyset_i,u) \in E$. Assign to $(T_i,u) \in E$ weight $\omega(T_i,u)$ which is the cardinality of the maximum independent set in $\baserel(u)$ containing $T_i$:
%We say that resources $r$ and $r'$ are {\em independent} if and only if $(r,r',\neq,\forall) \notin C$. For any $i \in \left[ n-d\right]$ and $u \in U$, we define $\omega(\emptyset_i,u)$ as the cardinality of the maximum %independent set of resources in $\baserel(u)$. In other words, calculating $\omega(\emptyset_i,u)$ means determining the largest set $X \subseteq \baserel(u)$ such that, for any $r,r' \in X$, $(r,r',\neq,\forall) \notin C$. %For any $i,j \in \left[ n-d\right]$, we have $\omega(\emptyset_i,u) = \omega(\emptyset_j,u)$ Now, to any $(T_i,u) \in E$, we assign  weight $\omega(T_i,u)$ which is the cardinality of the maximum independent set in $%\baserel(u)$ containing $T_i$:
\[
\omega(T_i,u) =  \max\limits_{\substack{\forall (r,r') \in X^2, ~(r,r',\updownarrow,\forall) \notin C \\  T_i \subseteq X \subseteq \baserel(u)}} \vert X \vert
\]
There are at most $n^2$ weights to compute. For any $e \in E$, calculation of every $\omega(e)$ can be performed in time $O(2^k)$ by enumerating all subsets of $\baserel(u)$. Thus the bipartite graph $G_P$ can be  built in time $O^{*}(2^k)$. We solve the \textsc{assignment problem} on $G_P$ and obtain a maximum weighted matching (MWM), $\mathcal{M}$, in polynomial time using the Hungarian algorithm~\cite{Ku10}. 

For every edge $e \in \mathcal{M}$ we compute an independent set $X_e$ as follows. For each $(T_i,u) \in \mathcal{M}$, choose a maximum independent set $X_{(T_i,u)}$ such that $T_i \subseteq X_{(T_i,u)} \subseteq \baserel(u)$ and, therefore, $\card{X_{(T_i,u)}} = \omega(T_i,u)$. We define the authorization relation $A_{\mathcal{M}}$ such that $A_{\mathcal{M}}(u) = X_{(T_i,u)}$.
%For every $(\emptyset_i,u) \in \mathcal{M}$, we compute $X_{(\emptyset_i,u)} \subseteq \baserel(u)$ which is independent and satisfies $\card{X_{\emptyset_i,u}} = \omega(\emptyset_i,u)$.  We define the %authorization relation $A_{\mathcal{M}}$ such that, for any $u \in U$, $A_{\mathcal{M}}(u) = X_{e_u}$ where $u$ is an end-vertex of the edge $e_u \in \mathcal{M}$.

For any $u \in U$, $A_{\mathcal{M}}(u) \subseteq \baserel(u)$. Furthermore, for any $u \in U$, $A_{\mathcal{M}}(u)$ contains resources which are pairwise independent, so $A_{\mathcal{M}}$ is valid. We define the function $\tilde{\pi}$ such that $\tilde{\pi}(r) = u$ if and only if $r \in T_i$ and $(T_i,u) \in \mathcal{M}$. $A_{\mathcal{M}}$ contains $\tilde{\pi}$ whose pattern is $P$. Thus, $\card{A_{\mathcal{M}}} \leq M_P$.

We know that there exists a valid function $\pi$ such that $M_P = M_{\pi}$ and $P(\pi) = P = P(\tilde{\pi})$. There exists a matching $\mathcal{M}'$ representing $\pi$ in $G_P$. If $\pi(T_i) = u$, then $(T_i,u) \in \mathcal{M}'$. If $\pi^{-1}(u)$ is empty, we associate $u$ with an arbitrary vertex $\emptyset_i$. Since $\card{A_{\mathcal{M}}}$ is equal to the weight of the MWM $\mathcal{M}$ of $G_P$, we have $M_{\pi} \leq \card{A_{\mathcal{M}}}$. Hence, $M_P = \card{A_{\mathcal{M}}}$.
\end{proof}

In Figure~\ref{fig:maxrasp}, we use a simple example to illustrate the matching process described in the proof of Lemma~\ref{lem10}. 
We consider the pattern $P = \set{\set{r_1,r_4}, \set{r_2}, \set{r_3}}$ and merge $\emptyset_1$ and $\emptyset_2$ into a single node to keep the bipartite graph readable.
The figure shows $G_P$ (derived from $\baserel$) and the resulting MWM (where the matching is indicated by the thick lines). %To keep the bipartite graph readable, we. 
Then, for example, $\omega(\set{r_3}, u_3) = 2$ because $X = \set{r_1, r_3}$ is the largest independent subset of $\baserel(u_3)$ containing $r_3$.
 
\begin{figure}[h]
\centering

\begin{subfigure}[b]{0.48\columnwidth}\centering
 \begin{tikzpicture}
  \node[scale=0.8,draw,circle] (r1') at (2.0,5.7) {$r_1$};
  \node[scale=0.8,draw,circle] (r2') at (3.5,5.7) {$r_2$};
  \node[scale=0.8,draw,circle] (r3') at (3.5,4.5) {$r_3$};
  \node[scale=0.8,draw,circle] (r4') at (2.0,4.5) {$r_4$};
  \draw (r1') -- (r2');
  \draw (r3') -- (r2');
  \draw (r3') -- (r4');
 \end{tikzpicture}
\caption{Constraints}
\end{subfigure}
\hfill
\begin{subfigure}[b]{0.48\columnwidth}\centering
\[
\begin{array}{c|cccc}
& r_1 & r_2 & r_3 & r_4 \\
\hline
u_1 & & & & 1\\
u_2 & 1 & & & 1\\
u_3 & 1 & 1 & 1 & \\
u_4 & 1 & 1 & & \\
u_5 & & & & 1\\ 
\end{array}
\]
\caption{Base relation $\baserel$}
\end{subfigure}

\vspace*{\baselineskip}

\begin{subfigure}[b]{.48\columnwidth}\centering
 \begin{tikzpicture}[scale=.8,transform shape]
 %bipartite Graph

  \draw[fill=black] (5.7,0.9) circle (0.05);
  \draw[fill=black] (5.7,2.3) circle (0.05);
  \draw[fill=black] (5.7,3.7) circle (0.05);
  \draw[fill=black] (5.7,5.1) circle (0.05);
  \node at (5.0,5.1) {$\left\lbrace r_1,r_4 \right\rbrace $};
  \node at (5.2,3.7) {$\left\lbrace r_2 \right\rbrace $};
  \node at (5.2,2.3) {$\left\lbrace r_3 \right\rbrace $};
  \node at (5.1,0.9) {$\emptyset_1,\emptyset_2$};

  \draw[fill=black] (8.2,0.5) circle (0.05);
  \draw[fill=black] (8.2,1.8) circle (0.05);
  \draw[fill=black] (8.2,3.1) circle (0.05);
  \draw[fill=black] (8.2,4.4) circle (0.05);
  \draw[fill=black] (8.2,5.7) circle (0.05);
  \node at (8.6,5.7) {$u_1$};
  \node at (8.6,4.4) {$u_2$};
  \node at (8.6,3.1) {$u_3$};
  \node at (8.6,1.8) {$u_4$};
  \node at (8.6,0.5) {$u_5$};

  \draw (5.7,5.1) -- (8.2,4.4);
  \draw (5.7,3.7) -- (8.2,1.8);
  \draw (5.7,3.7) -- (8.2,3.1);
  \draw (5.7,2.3) -- (8.2,3.1);
  \draw (5.7,0.9) -- (8.2,5.7);
  \draw (5.7,0.9) -- (8.2,4.4);
  \draw (5.7,0.9) -- (8.2,3.1);
  \draw (5.7,0.9) -- (8.2,1.8);
  \draw (5.7,0.9) -- (8.2,0.5);
 \end{tikzpicture}
\caption{$G_P$}
\end{subfigure}
\begin{subfigure}[b]{.48\columnwidth}\centering
 \begin{tikzpicture}[scale=.8,transform shape]
 %bipartite Graph

  \draw[fill=black] (5.7,0.9) circle (0.05);
  \draw[fill=black] (5.7,2.3) circle (0.05);
  \draw[fill=black] (5.7,3.7) circle (0.05);
  \draw[fill=black] (5.7,5.1) circle (0.05);
  \node at (5.0,5.1) {$\left\lbrace r_1,r_4 \right\rbrace $};
  \node at (5.2,3.7) {$\left\lbrace r_2 \right\rbrace $};
  \node at (5.2,2.3) {$\left\lbrace r_3 \right\rbrace $};
  \node at (5.1,0.9) {$\emptyset_1,\emptyset_2$};

  \draw[fill=black] (8.2,0.5) circle (0.05);
  \draw[fill=black] (8.2,1.8) circle (0.05);
  \draw[fill=black] (8.2,3.1) circle (0.05);
  \draw[fill=black] (8.2,4.4) circle (0.05);
  \draw[fill=black] (8.2,5.7) circle (0.05);
  \node at (8.6,5.7) {$u_1$};
  \node at (8.6,4.4) {$u_2$};
  \node at (8.6,3.1) {$u_3$};
  \node at (8.6,1.8) {$u_4$};
  \node at (8.6,0.5) {$u_5$};

  \draw[thick] (5.7,5.1) -- (8.2,4.4);
  \draw[thick] (5.7,3.7) -- (8.2,1.8);
  \draw[color=black!20] (5.7,3.7) -- (8.2,3.1);
  \draw[thick] (5.7,2.3) -- (8.2,3.1);
  \draw[thick] (5.7,0.9) -- (8.2,5.7);
  \draw[color=black!20] (5.7,0.9) -- (8.2,4.4);
  \draw[color=black!20] (5.7,0.9) -- (8.2,3.1);
  \draw[color=black!20] (5.7,0.9) -- (8.2,1.8);
  \draw[thick] (5.7,0.9) -- (8.2,0.5);
 \end{tikzpicture}
\caption{MWM}
\end{subfigure}
\caption{Computing a maximum weighted matching for an instance of $\maxAPEParg{ \sodu }$}
\label{fig:maxrasp}
\end{figure}
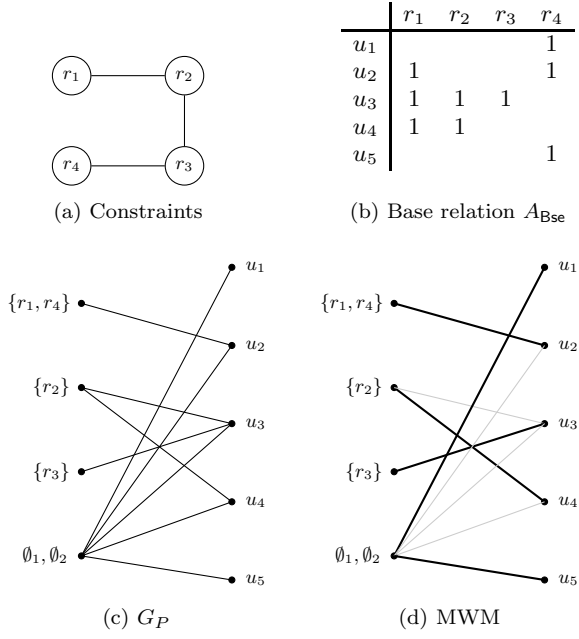

\begin{theorem}
$\maxAPEParg{ \sodu }$ (and thus  $\DAPEParg{ \sodu}$) can be solved in FPT time $O^*(2^{k+k \log k})$.
\end{theorem}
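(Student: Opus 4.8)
The plan is to assemble the theorem from three ingredients already available: the reduction of $\DAPEParg{\sodu}$ to functions (Theorem~\ref{sodufnc}), the observation that eligibility of a function depends only on its pattern, and the per-pattern algorithm of Lemma~\ref{lem10}. First I would recall that, by Theorem~\ref{sodufnc}, every valid solution of $\APEParg{\sodu}$ contains some function $\pi \in \Pi$, and that since the $\sodu$ constraints are UI, whether $\pi$ is eligible depends only on $P(\pi)$; consequently $\targetcard = \max\set{M_P : P \text{ an eligible pattern}}$, the maximum ranging over the partitions of $R$.

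The algorithm then simply enumerates all partitions of $R$. There are at most $\mathcal{B}_k = O(2^{k\log k})$ of them by \cite{BeTa10}. For each partition $P$ I would test eligibility in polynomial time (pick any representative function with pattern $P$ and check every constraint, which by assumption costs polynomial time), discard the ineligible ones, and for each eligible $P$ invoke Lemma~\ref{lem10} to compute, in time $O^*(2^k)$, a valid relation $A_P$ with $\card{A_P} = M_P$. The output is the $A_P$ of maximum cardinality; for the decision version $\DAPEParg{\sodu}$ one reports ``satisfiable'' exactly when some eligible $P$ yields a valid $A_P$, equivalently when the returned relation has cardinality at least $k$.

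Correctness would follow by the usual two-sided argument: each $A_P$ is valid by Lemma~\ref{lem10}, so $\targetcard \ge \max_P \card{A_P}$; conversely, an optimal valid relation $\targetrel$ contains some $\pi \in \Pi$ whose pattern $P$ is eligible, and then $\card{\targetrel} \le M_P = \card{A_P}$. For the running time, the $O(2^{k\log k})$ patterns times the $O^*(2^k)$ cost per pattern give $O^*(2^{k + k\log k})$; the reduction of Theorem~\ref{sodufnc} and the eligibility tests only add polynomial factors.

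Since Lemma~\ref{lem10} and Theorem~\ref{sodufnc} carry all the technical weight, the only point requiring care at this level is that the Bell-number enumeration really does cover every candidate — that no valid relation escapes the analysis. This is exactly where the ``contains a function'' property from Theorem~\ref{sodufnc} is used: it guarantees that $\targetcard$ equals $M_P$ for one of the finitely many patterns we enumerate, rather than being realized by some relation not dominated by any single pattern.
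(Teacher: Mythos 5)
Your proposal is correct and follows essentially the same route as the paper: enumerate the at most $\mathcal{B}_k = O(2^{k\log k})$ patterns, use the UI/``contains a function'' facts (Theorem~\ref{sodufnc}) to argue no valid relation is missed, and invoke Lemma~\ref{lem10} at cost $O^*(2^k)$ per eligible pattern, giving $O^*(2^{k+k\log k})$. The paper's proof is just a terser statement of this same argument, so no further comparison is needed.
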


\begin{proof}
We explore all eligible patterns $P$. For each one, we construct $A_{P}$ using Lemma \ref{lem10}, and keep the largest one. The time complexity of this algorithm is $O^{*}\left( 2^{k + k \log k}\right)$. Hence, it is FPT parameterized by $k$.
\end{proof}

\subsection{Complexity of $\DAPEParg{ \sode }$}

In this section, we solve the $\maxAPEParg{ \sode}$ problem by reducing to the \textsc{max weighted partition} problem~\cite{BjHuKo09}: that is, given a ground set $K$ and $p$ functions $f_1,\ldots,f_p$ from $2^K$ to integers from the range $[-M,M]$, $M\ge 1$, find a partition $\left\lbrace K_1,\ldots,K_p \right\rbrace $ of $K$ that maximizes $\sum_{i=1}^{p} f_i(K_i)$.
The following result is a corollary of the main theorem on \textsc{max weighted partition} in \cite{BjHuKo09}.

\begin{lemma}\label{mwp-lem}
\textsc{max weighted partition} can be solved in time $O^*(2^k p^2 M).$ 
\end{lemma}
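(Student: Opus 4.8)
The plan is to obtain the lemma as a direct consequence of the inclusion--exclusion / subset-convolution framework of Bj\"orklund, Husfeldt and Koivisto~\cite{BjHuKo09}, keeping careful track of the extra factors that the $O^*$ notation must hide. Recall that in that framework, counting or detecting partitions of a ground set $K$ is reduced to computing zeta and M\"obius transforms over the subset lattice of $K$, and the total cost is $O^*(2^{|K|})$ multiplied by the cost of one arithmetic operation in whatever (semi)ring is used to encode the quantity one wants to track. For plain counting that ring is $\mathbb{Z}$; to record weights I would instead work in the polynomial ring $\mathbb{Z}[z]$, where the exponent of $z$ bookkeeps the total weight of a partition.

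Concretely I would first normalise: since each $f_i$ takes values in $[-M,M]$, replace $f_i$ by the monomial-valued function $g_i(S) = z^{\,f_i(S)+M}$, so that all exponents lie in $\{0,\dots,2M\}$ and no negative powers arise. The key identity is that the $p$-fold subset convolution satisfies
\[
(g_1 \ast \cdots \ast g_p)(K) \;=\; \sum_{K_1 \sqcup \cdots \sqcup K_p = K} z^{\,\sum_{i} f_i(K_i) + pM},
\]
the sum ranging over the (ordered, possibly with empty parts) partitions of $K$; hence the coefficient of $z^{\,W+pM}$ equals the number of partitions of total weight $W$, and $\max_{\text{partitions}}\sum_i f_i(K_i)$ is simply the largest exponent occurring with a nonzero coefficient. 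So it suffices to compute this single convolution and scan its $O(pM)$ coefficients. Equivalently, one may run the inclusion--exclusion sum $\sum_{X \subseteq K} (-1)^{|K|-|X|} \prod_{i=1}^p \hat g_i(X)$ of~\cite{BjHuKo09}, where $\hat g_i$ is a zeta transform and an auxiliary ``size'' variable is used to force disjointness; this produces the same polynomial.

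It then remains to account for the running time. A single subset (zeta or M\"obius) transform uses $O^*(2^{|K|})$ ring operations, and a $p$-fold subset convolution is carried out as $p-1$ pairwise convolutions, hence $O^*(2^{|K|}\, p)$ ring operations in total. Every polynomial produced along the way has $z$-degree at most $2pM$, so each ring operation --- an addition or a multiplication of two such polynomials --- costs $O^*(pM)$ (using any quasi-linear polynomial multiplication; even the schoolbook bound is within $O^*$ once the degree is capped). Multiplying the two estimates and using $|K| = k$ gives $O^*(2^{k}\, p \cdot pM) = O^*(2^{k}\, p^2 M)$. The integer coefficients that appear are counts of ordered partitions, hence at most $p^{k}$, so they have polynomially many bits and arithmetic on them is absorbed into $O^*$.

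The only genuine obstacle is this last piece of bookkeeping: verifying that the dependence on $p$ is exactly quadratic --- one factor from the number of pairwise convolutions, one from the degree growth of the weight-tracking polynomials --- and that the dependence on $M$ stays linear even after $p$ successive convolutions. Everything else is a verbatim application of~\cite{BjHuKo09}, so no new idea is needed; one should, however, double-check that the input-size conventions of~\cite{BjHuKo09} (their universe size versus our $k$, their weight bound versus our $M$) line up, so that the polynomial factors they leave implicit are precisely the ones the $O^*$ here is intended to suppress.
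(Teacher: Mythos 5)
The paper in fact gives no proof of this lemma---it is stated as a direct corollary of the main theorem on \textsc{max weighted partition} of Bj\"orklund, Husfeldt and Koivisto~\cite{BjHuKo09}---and your reconstruction (shifting each $f_i$ into a monomial $z^{f_i(S)+M}$, forcing disjointness via ranked zeta transforms or iterated fast subset convolution, and paying $O^*(2^k p)$ ring operations at $O^*(pM)$ each) is precisely the argument that citation encapsulates, so your route is essentially the paper's. Two minor caveats: the parenthetical claim that schoolbook polynomial multiplication would also suffice is wrong when $M$ is not polynomially bounded (it would give roughly $O^*(2^k p^3 M^2)$), so the quasi-linear multiplication you rely on is genuinely needed; and to output the maximizing partition rather than just its weight one appends a standard self-reduction, which stays within the same $O^*(2^k p^2 M)$ bound.
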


\begin{theorem}\label{maxsode-lem}
\sloppy
$\DAPEParg{ \sode}$ and $\maxAPEParg{ \sode }$ can be solved in time $\mathcal{O}^*(2^k)$.
\end{theorem}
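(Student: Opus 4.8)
The plan is to prove the stronger claim about $\maxAPEParg{\sode}$; the decision problem $\DAPEParg{\sode}$ then follows immediately, since it merely asks whether \emph{any} valid relation exists, i.e.\ whether the maximum computed by $\maxAPEParg{\sode}$ is defined. The approach is a reduction to \textsc{max weighted partition} with ground set $K = R$, so that Lemma~\ref{mwp-lem} gives running time $\mathcal{O}^*(2^k p^2 M)$; I would make sure that $p$ and $M$ are polynomial in the input size, which yields the claimed $\mathcal{O}^*(2^k)$.

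First I would identify the right notion of ``pattern'' for $\sode$. Let $H$ be the graph on vertex set $R$ with an edge $\set{r,r'}$ for every constraint $(r,r',\updownarrow,\exists)\in C$. A complete, authorized relation $A$ violates $(r,r',\updownarrow,\exists)$ exactly when $A(r)=A(r')$, so $A$ is eligible iff the partition of $R$ into blocks of equal columns ($r\sim r'$ iff $A(r)=A(r')$) has every block independent in $H$. Accordingly I would let the partition $\set{K_1,\dots,K_p}$ of $R$ sought by \textsc{max weighted partition} be this column-equality partition, padding with empty blocks up to $p$. For a candidate block $T\subseteq R$, the largest column that can be assigned simultaneously to every $r\in T$ is a subset of $B_T\stackrel{\rm def}{=}\bigcap_{r\in T}\baserel(r)$, and using all of $B_T$ contributes $\card{T}\cdot\card{B_T}$ to $\card{A}$. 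This suggests setting $f_i(T)=\card{T}\cdot\card{B_T}$ when $T$ is independent in $H$ and $B_T\ne\emptyset$, and $f_i(T)=-M$ otherwise, with $M$ a large polynomial bound; the relation $\targetrel$ is then recovered from the optimal partition by giving each block its full column.

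The main obstacle is that distinct blocks must receive \emph{distinct} user sets — two resources in different blocks joined by an edge of $H$ must get different columns — and this is a cross-block condition that \textsc{max weighted partition} cannot express: naively assigning $A(r)=B_T$ to all $r\in T$ may merge two independent blocks into a single, non-independent column class and destroy eligibility. I would address this in two steps. First, argue it is without loss of generality to require \emph{all} blocks to carry pairwise distinct columns: two non-adjacent blocks sharing a column can always be merged, which preserves validity and cardinality and decreases the number of blocks (so $p=k$ suffices). Second, make this distinctness free by a preprocessing step on $\baserel$: add $k$ auxiliary users $g_1,\dots,g_k$ with $g_j$ authorized for every resource except $r_j$. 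Then for $T\ne T'$ the auxiliary parts of $B_T$ and $B_{T'}$ (corresponding to $R\setminus T$ and $R\setminus T'$) differ, so all the sets $B_T$ become pairwise distinct; hence giving every block its full column now always produces an eligible relation, and $\card{T}\cdot\card{B_T}$ becomes genuinely additive over the partition. The step I expect to need the most care is the bookkeeping that removes the (partition-dependent but easily computed) contribution of the auxiliary users from the optimum — folding it directly into the $f_i$, or doing it after the fact — and the interaction of the auxiliary users with completeness; a cleaner gadget than the one above may well be preferable here. With $p=k$ and $M=\mathcal{O}(nk)$, Lemma~\ref{mwp-lem} gives $\mathcal{O}^*(2^k)$, and for $\DAPEParg{\sode}$ one simply checks whether the returned optimum is finite.
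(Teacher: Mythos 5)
You have correctly isolated the central difficulty---\textsc{max weighted partition} with block-position-independent weights cannot express the requirement that distinct adjacent blocks receive distinct columns---but your fix does not work, and this is a genuine gap rather than bookkeeping. Adding the auxiliary users $g_1,\dots,g_k$ changes which relations are \emph{eligible}, not merely their cardinality: once the auxiliary users are deleted again, the columns they made distinct can collapse back onto one another. Concretely, take $R=\set{r_1,r_2}$, a single constraint $(r_1,r_2,\updownarrow,\exists)$, and $\baserel(r_1)=\baserel(r_2)=\set{u}$. The original instance is unsatisfiable, since any complete authorized relation has $A(r_1)=A(r_2)=\set{u}$. After your preprocessing, $\baserel'(r_1)=\set{u,g_2}$ and $\baserel'(r_2)=\set{u,g_1}$, the only independent partition is into singletons, and your algorithm accepts the relation $A(r_1)=\set{u,g_2}$, $A(r_2)=\set{u,g_1}$ (weight $4$, or $2$ after subtracting or folding out the auxiliary contribution $\card{T}(k-\card{T})$); either way it declares the instance satisfiable. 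No after-the-fact correction of the optimum can repair this, because $\max\sum_T \card{T}\,\card{B_T}$ over independent partitions genuinely overestimates $\targetcard$ whenever two adjacent blocks are forced to share their maximal column $B_T$ and one of them must shrink; your merging observation only covers \emph{non-adjacent} blocks, which is the harmless case, and with identical functions $f_1=\dots=f_p$ the assignment of blocks to distinct function indices enforces nothing.

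The paper closes exactly this hole by a different mechanism: the weight functions are indexed not by block positions but by explicit candidate columns $X$ drawn from a polynomial-size family $\chi=\bigcup_{r\in R}\chi_r$, where $\chi_r$ consists of all subsets of $\baserel(r)$ if $\card{\baserel(r)}\le\log k$ and otherwise of the $d(r)+1$ largest subsets of $\baserel(r)$ (with $d(r)$ the number of $\sode$ constraints involving $r$); one sets $f_X(T)=\card{T}\,\card{X}$ when $T$ is independent and $X\subseteq\baserel(r)$ for all $r\in T$, and a large negative value otherwise. Since each index $X$ receives at most one block, cross-block distinctness of columns is automatic, and a pigeonhole/exchange argument shows that any valid $\targetrel$ can be modified, without losing cardinality, so that every column lies in $\chi$: a resource $r$ has at most $d(r)$ forbidden columns, so one of the $d(r)+1$ largest subsets of $\baserel(r)$ is always available and is no smaller than $\targetrel(r)$. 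To rescue your write-up you would need a device of this kind (columns as function indices, together with an argument bounding the number of candidate columns per resource); the auxiliary-user gadget should be dropped.
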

\begin{proof}
We reduce to the \textsc{Max Weighted Partition} problem. The ground set is $R$, and we construct weight functions indexed by sets, which are elements of a family $\chi$ of subsets of $U$. We say that two resources $r, r' \in R$ are \emph{independent} if and only if $(r, r', \updownarrow, \exists) \notin C$. Moreover, the \emph{degree} $d(r)$ of a resource $r \in R$ is the number of resources $r'$ such that $(r, r', \updownarrow, \exists) \in C$. 
The index set $\chi$ is defined as $\bigcup_{r \in R} \chi_r$, where $\chi_r$ consists of all subsets of $\baserel(r)$ if $|\baserel(r)| \le \log k$, or $d(r)+1$ largest subsets of $\baserel(r)$ if $|\baserel| > \log k$ (breaking ties arbitrarily).
For every $X \in \chi$, we define a weight function $f_X : 2^R \rightarrow [-|\baserel|-1, |\baserel|]$ as follows: for every $T \subseteq R$, $f_X(T)$ is set to $|T||X|$ if $T$ is an independent set and $X \subseteq \baserel(r)$ for every $r \in T$, and $f_X(T) = -|\baserel|-1$ otherwise.
Now, we show that any valid authorization relation $\targetrel$ of $\DAPEParg{ \sode }$ corresponds to a partition of $R$ of cost $|\targetrel|$, and vice versa.

Let $\mathcal{T} = \{T_X: X \in \chi\}$ be a partition of $R$ of nonnegative weight. Note that if $f_X(T_X) \ge 0$ for every $X \in \chi$ then $\sum_{X\in \chi}f_X(T_X)\le |\baserel|$. Thus, since $\mathcal{T}$ is of nonnegative weight, $f_X(T_X) \ge 0$ for every $X \in \chi$.
%First observe that it implies that $f_X(T_X) \ge 0$ for every $X \in \chi$. 
Construct an authorization relation $\targetrel$ such that for any $X \in \chi$, for any $r \in T_X$, we have $\targetrel(r) = X$. Obviously, since $\mathcal{T}$ is a partition of $R$, $\targetrel$ is complete. Then, by definition of $\chi$, we have $\targetrel(r) \subseteq \baserel(r)$ and thus $\targetrel$ is authorized. Finally, for any $r, r' \in R$ such that $\targetrel(r) = \targetrel(r')$, it must hold that $r, r' \in T_{\targetrel(r)}$, and since $T_{\targetrel(r)}$ is independent, $(r, r', \updownarrow, \exists) \notin C$, and $\targetrel$ is eligible. In other words, $\targetrel$ is a valid authorization relation, and its weight is $\sum_{X \in \chi} f_X(T_X)$.

For any valid authorization relation $\targetrel$, let $P(\targetrel)$ be the partition of $R$ into equivalence classes with respect to the following equivalence relation: $r, r' \in R$ are equivalent if and only if $\targetrel(r) = \targetrel(r')$. We now prove that there always exists a valid authorization relation $\targetrel'$ of size at least $|\targetrel|$ such that $\targetrel'(r) \in \chi$ for every $r \in R$. If this is true, then it will mean that we may assume that $P(\targetrel) = \{T_X: X \in \chi\}$, and since $\targetrel$ is valid, $\sum_{X \in \chi} f_X(T_X) = |\targetrel|$.
For $r \in R$, if $|\baserel(r)| \le \log k$, then since $\chi$ contains all subsets of $\baserel(r)$, it holds that $\targetrel(r) \in \chi$. If $|\baserel(r)| > \log k$ and $\targetrel(r) \notin \chi_r$, recall that $\chi_r$ consists of $d(r)+1$ largest subsets of $\baserel(r)$. Hence, there must exist $X \in \chi_r$ such that $\targetrel(r') \neq X$ for every $r'$ such that $(r, r', \updownarrow, \exists) \in C$. Hence, replacing $\targetrel(r)$ by $X$ creates another valid authorization relation $\targetrel'$ of size at least $|\targetrel|$ and such that $\targetrel'(r) \in \chi_r$. Repeating this modification for every $r \in R$ such that $\targetrel(r) \notin \chi$, we end up with a valid authorization having the desired property.

Using the reduction above together with Lemma \ref{mwp-lem}, we prove the claimed statement.
\end{proof}

\section{Discussion}

\subsection{Constraint types}\label{sec:extended-constraint-types}

In Section~\ref{sec:bod-and-sod-constraints} we identified a number of constraint types of the form $(r,r',\circ,Q)$, where $r$ and $r'$ are resources, $\circ$ is a logical binary operator, and $Q$ is a quantifier.
For ease of reference we summarize these constraints and the respective conditions for satisfaction in Table~\ref{tbl:constraint-summary}.

\begin{table}[h]\setlength{\extrarowheight}{2pt}\centering
 \begin{tabular}{|c|c|}
 \hline
  $(r,r',\leftrightarrow,\forall)$ & $A(r) = A(r')$ \\
  $(r,r',\leftrightarrow,\exists)$ & $A(r) \cap A(r') \ne \emptyset$ \\
  $(r,r',\updownarrow,\forall)$ & $A(r) \cap A(r') = \emptyset$ \\
  $(r,r',\updownarrow,\exists)$ & $A(r) \ne A(r')$ \\
  $(r,r',\rightarrow,\forall)$ & $A(r) \subseteq A(r')$ \\
 \hline
 \end{tabular}
 \caption{Constraint types defined in Section~\ref{sec:bod-and-sod-constraints}}\label{tbl:constraint-summary}
\end{table}

We chose to introduce the constraints in Table~\ref{tbl:constraint-summary} because of their obvious connections to known constraints in the literature and to simplify the exposition of the technical material.
We now discuss ways in which these constraints could be extended.
Notice that the satisfaction of each constraint may be defined in terms of $A(r)$ and $A(r')$.

One obvious extension, then, is to define constraints of the form $((r_1,\dots,r_m),\circ,Q)$, and to define constraint satisfaction in terms of $A(r_i)$, $1 \leqslant i \leqslant m$.
We may define constraint satisfaction in a number of ways, including (but not limited to) the following: %
\begin{inparaenum}[(i)]
 \item for \emph{all} $i$ and $j$, $1 \leqslant i < j \leqslant m$, $(r_i,r_j,\circ,Q)$ is satisfied; or
 \item for \emph{some} $i$ and $j$, $1 \leqslant i < j \leqslant m$, $(r_i,r_j,\circ,Q)$ is satisfied.
%  \item $\bigcap_{i=1}^m A(r_i) = \emptyset$.
\end{inparaenum}
Note, however, that the first of these choices can be realized simply by defining a set of constraints $\set{(r_i,r_j,\circ,Q) : 1 \leqslant i < j \leqslant m}$.
% Hence, we will assume the second choice is used.
% {\color{red}Are these constraints $f(k)$-bounded?}

\sloppy Consider the constraint $((r_1,\dots,r_m),\updownarrow,\forall)$, and suppose, as another alternative for constraint satisfaction, we require that $\bigcap_{i=1}^m A(r_i) = \emptyset$.
% which requires $A(r_i) \cap A(r_j) = \emptyset$ for some $i$ and $j$.
In other words, there is no user that is assigned to all resources in the set $\set{r_1,\dots,r_m}$.
It is easy to see that such a constraint is $k$-bounded (since removing a user from a valid relation can only affect completeness, not the eligibility, of the relation).
Thus, with this interpretation, $((r_1,\dots,r_m),\updownarrow,\forall)$ represents a canonical SMER constraint~\cite{LiTrBi07} (if the set of resources is interpreted as a set of mutually exclusive roles).
We return to SMER constraints in Section~\ref{sec:resiliency-and-sod}.

Another possible extension is to define constraints of the form $(R',R'',\circ,Q)$ and to define constraint satisfaction in terms of $A(R')$ and $A(R'')$.
For example, the constraint $(R',R'',\updownarrow,\forall)$ requires that $A(R') \cap A(R'') = \emptyset$.
Again, constraints of this form are $k$-bounded.
In other words, the users assigned to resources in $R'$ are different from the users assigned to resources in $R''$.
This constraint, therefore, allows us to specify that resources should be allocated to disjoint \emph{teams} of users (rather than just individual users).
Of course such constraints could be nested: we might define a further constraint $(R''_1,R''_2,\updownarrow,\forall)$ where $R''_1$ and $R''_2$ are subsets of $R''$.

\subsection{Resiliency in access control}

Suppose we are given an authorization relation $A \subseteq U \times R$ and a set of resources $Q \subseteq R$.
Then a resiliency policy is defined by a tuple $(Q,s,d,t)$, where $s$, $d$ and $t$ are integers~\cite{LiWaTr09}.
The policy is satisfied if, following the removal of any $s$ users from $U$, there exist $d$ disjoint teams of users, $U_1,\dots,U_d$, such that $A(U_i) \supseteq Q$ and $\card{U_i} \leqslant t$ for each $i$.

The \emph{resiliency checking problem}~\cite{LiWaTr09} asks whether a resiliency policy is satisfiable or not.
It has been shown that the hard part of the problem is finding the teams (since we can enumerate all possible user sets that are missing $s$ users), so research has focused on solving the problem for instances in which $s = 0$~\cite{CrGuWa15,LiWaTr09}.

Informally, a solution of the resiliency checking problem may be viewed as a function mapping (different copies of the set of) resources to users.
Thus we can transform an instance of the resiliency checking problem (where $s = 0$) into an instance of \DAPEP.
We define $d$ copies of each resource in $Q$; we write $r^{(i)}$ to denote the $i$th copy of resource $r$ in $Q$.
We then define \[ \baserel = \set{(u,r^{(i)}) : (u,r) \in A, 1 \leqslant i \leqslant d}. \]
Finally, we define the global constraint $(=,1)$ and, for all $r_1,r_2 \in Q$ and all $i$ and $j$ such that $1 \leqslant i < j \leqslant d$, we define a constraint $(r_1^{(i)},r_2^{(j)},\updownarrow,\forall)$.
The authorization relation $\baserel$ ensures that each user is authorized according to the original relation $A$.
The global constraint ensures that each resource is assigned to a single user.
The other constraints ensure that a user is only assigned to resources in one copy of $Q$.

The results in Section~\ref{sec:complexity-fk-bounded} assert that the resulting problem is FPT.
Hence, the resiliency checking problem is also FPT (confirming an earlier result of Crampton \emph{et al.}~\cite{CrGuWa15}).

Note, finally, that we can simplify the above construction, using the constraints introduced in Section~\ref{sec:extended-constraint-types}: we use $Q^{(i)}$ to denote the $i$th copy of the set of resources $Q$ and define the set of constraints
\[
 \set{\big(Q^{(i)},Q^{(j)},\updownarrow,\forall\big) : 1 \leqslant i < j \leqslant d}.
\]

\subsection{Resiliency and separation of duty}\label{sec:resiliency-and-sod}

The RBAC96 standard discusses constraints based on mutually exclusive roles~\cite{SaCoFeYo96}.
Such a constraint is defined by a set of roles $R_{\sf mutex}$ and is satisfied by the user-role assignment relation provided no user is assigned to more than one role in $R_{\sf mutex}$.

Li, Tripunitara and Bizri introduced the more general \emph{static mutually exclusive role} (SMER) constraints~\cite{LiTrBi07}, which have the form $(R_{\sf mutex},t)$, where $t \leqslant \card{R_{\sf mutex}}$.
(A canonical SMER constraint has $t = \card{R_{\sf mutex}}$.)
Such a constraint is satisfied provided every user is assigned fewer than $t$ of the roles in $R_{\sf mutex}$.
We can check whether a user-role assignment relation satisfies a SMER constraint in polynomial time~\cite{LiTrBi07}, informally because we only need to consider each user once.

Li \emph{et al.} went on to distinguish SMER constraints from \emph{static separation of duty} (SSoD) policies~\cite{LiTrBi07}, which are defined by a set of permissions $P$ and an integer $t \leqslant \card{P}$.
Such a constraint is satisfied if no subset of fewer than $t$ users is collectively authorized for the permissions in $P$.
Checking whether a SSoD policy is satisfied by a given user-role assignment relation is computationally hard, informally because we need to consider every possible subset of users having cardinality less than $t$.

\sloppy
Li, Wang and Tripunitara studied the complexity of determining whether it was possible to simultaneously satisfy static separation of duty constraints and a resiliency policy~\cite{LiWaTr09}.
Unsurprisingly, it is computationally hard to decide this question, given that it is hard to decide whether an authorization relation satisfies a static separation of duty policy~\cite{LiTrBi07}.
However, they did not consider the possibility of simultaneously satisfying SMER constraints and resiliency policies.
Now observe that a SMER constraint is user-independent and is $k$-bounded.
Thus, for example, it is possible to develop an FPT algorithm to determine whether there exists an authorization relation $A \subseteq U \times R$ such that a resiliency policy and a set of SMER constraints are simultaneously satisfied.

\section{Conclusion}\label{sec:conclusion}

In this paper we have introduced a more general framework for articulating problems of finding authorization relations (``policies'') that must satisfy certain kinds of constraints.
We have shown that there exist FPT algorithms to solve  the authorization policy existence problem when all constraints are user-independent and are bounded in an appropriate way.
We have also shown that many constraints of practical interest are indeed user-independent and bounded.

We have chosen to consider user-independent constraints, not least because such constraints have been studied extensively in the literature on workflow satisfiability.
In fact, we could equally well consider resource-independent constraints because our framework is symmetric in a way that workflow satisfiability questions are not.
So, for example, we could define a constraint of the form $(u,u',\updownarrow,\forall)$ which would be satisfied provided the set of resources assigned to $u$ is distinct to the set of resources assigned to $u'$.
In this way, we search for authorization relations that guarantee certain users do not have access to the same resources.
Moreover, if the number of users is small relative to the number of resources, which may well be the case in some multi-user systems (such as file systems), then $n$ will be the small parameter and the symmetry of our framework admits FPT algorithms for solving problem instances of this form.

We believe there are many opportunities for future work, not least exploring what types of authorization constraints might be useful in practice and determining whether those constraints are user-independent and bounded.

%\clearpage 

\appendix\section{Results for bounded constraints}

\begin{proposition}\label{prop:bindingbcc}
% \sloppy 
Constraints $(r',r'',\rightarrow,\forall)$, $\left(r',r'',\leftrightarrow,\forall\right)$ and $\left(r',r'',\leftrightarrow,\exists\right)$ are $(k-1)$-bounded.
\label{th-boundedc-bodu}
\end{proposition}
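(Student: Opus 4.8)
The goal is to show, for each of the three constraints $c \in \set{(r',r'',\rightarrow,\forall),(r',r'',\leftrightarrow,\forall),(r',r'',\leftrightarrow,\exists)}$ (with $r' \ne r''$, as is standard for these constraint types), that every $A$ valid with respect to $U \times R$ and $c$ --- equivalently, every complete $A$ satisfying $c$ --- has $\card{\mathsf{core}(A : U\times R, c)} \le k-1$. I would set up notation once: for $u \in U$ write $A \setminus u = \set{(v,r) \in A : v \ne u}$ and $S_u = \set{r \in R : A(r) = \set{u}}$, the resources for which $u$ is the sole authorized user. The sets $\set{S_u}_u$ are pairwise disjoint, $A\setminus u$ is incomplete iff $S_u \ne \emptyset$, and so $u$ belongs to the core iff $S_u \ne \emptyset$ or $A \setminus u$ violates $c$.

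For $c=(r',r'',\rightarrow,\forall)$ and $c=(r',r'',\leftrightarrow,\forall)$ the satisfaction condition is $A(r') \subseteq A(r'')$, respectively $A(r')=A(r'')$, and this relation survives deleting the same user from both sides; hence $A\setminus u$ never violates $c$, and the core equals $\set{u : S_u \ne \emptyset}$. The plan is then to build an injection $g$ from the core into $R\setminus\set{r''}$: if $r'' \in S_u$ then $A(r'')=\set{u}$, whence (since $A(r')$ is nonempty and $\subseteq A(r'')$, resp.\ $=A(r'')$) also $A(r')=\set{u}$ and $r'\in S_u$, so I may set $g(u)=r'$; otherwise $r''\notin S_u$ and I set $g(u)$ to an arbitrary element of $S_u$. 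Either way $g(u)\in S_u\setminus\set{r''}$, so disjointness of the $S_u$ gives injectivity, and $\card{\mathsf{core}(A:U\times R,c)}\le k-1$.

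For $c=(r',r'',\leftrightarrow,\exists)$ the condition is $A(r')\cap A(r'')\ne\emptyset$, and $A\setminus u$ violates it only when $A(r')\cap A(r'')=\set{u}$ exactly. I would split on $I := A(r')\cap A(r'')$. If $\card{I}\ge 2$ then $\card{A(r')},\card{A(r'')}\ge 2$, so $r',r''$ lie in no $S_u$ and no deletion breaks $c$; the core is $\set{u:S_u\ne\emptyset}$ with all $S_u\subseteq R\setminus\set{r',r''}$, so its size is at most $k-2$. If $I=\set{u_0}$, then the core is contained in $\set{u_0}\cup\set{u:S_u\ne\emptyset}$, and I would again construct an injection into $R$ missing at least one resource, distinguishing according to whether $r'$ (or $r''$) is itself a sole-authorized-user resource and exploiting that $u_0$ lies in both $A(r')$ and $A(r'')$ (so that whenever $r'$ or $r''$ is such a resource, the corresponding sole user is already $u_0$).

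The first two constraints are routine: deletion cannot break them, so the core is just the set of sole-authorized users, and the bound is a disjointness count once one notes that $r'$ and $r''$ cannot simultaneously witness two distinct such users. The only real obstacle is the existential binding constraint $(r',r'',\leftrightarrow,\exists)$, where deleting the unique common user \emph{can} break the constraint; the care lies in tracking that user $u_0$ and whether it has already been ``charged'' to $r'$ or $r''$, which forces the $\card{I}=1$ case to branch into a further small case analysis.
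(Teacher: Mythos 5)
Your argument is correct and establishes exactly what $(k-1)$-boundedness asks for (an upper bound on $\mathsf{core}(A : U \times R, c)$ for every valid $A$), and it is in places more careful than the paper's own write-up. The paper's proof combines a construction showing the core can actually reach size $k-1$ (a tightness claim the definition does not require) with a one-sentence pigeonhole step for the upper bound: any $k$ users of a valid relation include two assigned to a common resource, one of whom can be removed. Read literally, that sentence needs exactly the care you supply: for $(r',r'',\leftrightarrow,\exists)$, two users sharing a resource need not both be dispensable, since one may be the unique member of $A(r')\cap A(r'')$ and the other the sole user of some other resource. Your injection via the sole-user sets $S_u$, which charges the constraint-critical user and the sole users to distinct resources, makes the counting precise; for the two universal constraints your observation that deleting a user from both sides preserves $A(r')\subseteq A(r'')$ (resp.\ $A(r')=A(r'')$) reduces the core to $\set{u : S_u \neq \emptyset}$, which is the same counting idea the paper intends. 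Your final case $I=\set{u_0}$ closes even more simply than the branching you anticipate: since $r'$ and $r''$ can lie only in $S_{u_0}$, every core user other than $u_0$ injects into $R\setminus\set{r',r''}$, giving at most $1+(k-2)=k-1$ directly. The only omissions relative to the paper are harmless: you give no example attaining the bound (not needed for the proposition), and you assume $r'\neq r''$, which the paper does implicitly.
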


\begin{proof}
\sloppy
Let $c$ be one of $(r',r'',\rightarrow,\forall)$, $\left(r',r'',\leftrightarrow,\forall\right)$ or $\left(r',r'',\leftrightarrow,\exists\right)$.
% Let $S=\{u^*\}\cup \{u_r:\ r \in R\setminus \left\{r',r''\right\}\}$ is the set of $k-1$ users (i.e., all users in $S$ are distinct).
Let $V$ be a set of $k-1$ distinct users and consider $A$, where $A(r') = A(r'') = \set{u}$ for some $u \in V$ and, for any $r_1,r_2 \in R\setminus \left\{r',r''\right\}$, $\card{A(r_1)} = 1$, $A(r_1) \ne A(r')$, and $A(r_1) \ne A(r_2)$.
% Then $\psi_{c,A}(\I) = V$ and one can check that $\Psi_c(\I) \le \psi_{c, A}(\I)$: indeed, for any solution $A'$, any subset of $A'(R)$ of size at least $k$ must contain two users who are both assigned to the same resource, and thus one of them can be removed without affecting completeness or satisfiability. We thus conclude $\Psi_c(\I) \le k-1$.
Then $\ker(A : U \times R,c) = V$ and $\card{\ker(A : U \times R,c)} = k-1$.
Moreover, for any relation $A'$ valid with respect to $U \times R$ and $c$, any subset of $A'(R)$ of size at least $k$ must contain two users who are both assigned to the same resource; thus one of them can be removed without affecting completeness or satisfiability.
Hence, the constraint is $(k-1)$-bounded.
\end{proof}

\begin{proposition}
Constraints $\left(r',r'',\updownarrow,\forall\right)$ and $\left(r',r'',\updownarrow,\exists\right)$ are $k$-bounded.
\label{th-boundedc-sodu}
\end{proposition}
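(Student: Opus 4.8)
The plan is to mimic the argument used for Proposition~\ref{th-boundedc-bodu} (the $(k-1)$-bounded case), adjusting the witness so that the core can have size exactly $k$. I first exhibit a valid relation $A$ with $\card{\ker(A:U\times R,c)}=k$, which shows the bound $k$ cannot be improved; then I argue no valid relation can have a core larger than $k$.

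For the first part, let $V=\set{u_1,\dots,u_k}$ be a set of $k$ distinct users, write $R=\set{r_1,\dots,r_k}$ with $\set{r',r''}\subseteq R$, and define $A$ by $A(r_i)=\set{u_i}$ for each $i$ (a bijection between resources and the users of $V$). Since $r'$ and $r''$ are distinct resources they receive distinct users, so $A(r')\cap A(r'')=\emptyset$, hence $A$ satisfies both $(r',r'',\updownarrow,\forall)$ and $(r',r'',\updownarrow,\exists)$; and $A$ is trivially authorized (w.r.t.\ $U\times R$) and complete, so $A$ is valid. Removing any $u_i$ from $A$ deletes the only user assigned to $r_i$, making the relation incomplete; hence $A$ requires every $u_i$, so $\ker(A:U\times R,c)=V$ and $\card{\ker(A:U\times R,c)}=k$.

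For the upper bound, let $A'$ be any relation valid with respect to $U\times R$ and $c$. Consider any set $W$ of users with $\card{W}\ge k+1$ contained in the core. Since $A'$ assigns each of the $k$ resources a nonempty set of users, but by definition of "requires" every user in the core is the sole occupant of some resource's image in a way whose removal breaks validity, a counting/pigeonhole step forces two core users to be "interchangeable" for completeness, so one of them may be deleted without losing completeness; and because $c$ is user-independent (and, for $\updownarrow$, its satisfaction depends only on whether $A'(r')$ and $A'(r'')$ intersect, which deleting a third user cannot affect, and deleting a user from $A'(r')$ or $A'(r'')$ only shrinks an already-disjoint pair), eligibility is preserved too. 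This contradicts that both users were required, so $\card{\ker(A':U\times R,c)}\le k$.

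The main obstacle is making the pigeonhole step in the upper bound fully rigorous: I need to argue carefully that among any $k+1$ core users at least one is genuinely redundant. The cleanest way is the iterative-removal viewpoint already used in the body for $(r,r',\updownarrow,\forall)$: starting from $A'$, repeatedly delete a user the relation does not require; the process terminates at a valid relation whose user set is exactly the core, and that relation has at most one user per resource except possibly the forced two on $r'$ and $r''$ — but since with one user per resource completeness already pins down $k$ users and any extra user beyond a system of distinct representatives is removable, the core has size at most $k$. I would phrase the $\updownarrow,\exists$ case identically, noting $A(r')\neq A(r'')$ is likewise preserved under deleting users outside $\set{r',r''}$'s images and is implied by (though not equivalent to) disjointness, so the same witness and the same termination argument apply.
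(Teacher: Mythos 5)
Your lower-bound witness and the overall pigeonhole/iterative-removal strategy coincide with the paper's proof, but your upper-bound argument contains two genuine errors, both centred on the existential constraint $(r',r'',\updownarrow,\exists)$. First, it is not true that deleting a user can never break eligibility: that constraint is satisfied iff $A(r')\neq A(r'')$, and if $A(r')=\set{u,v}$ and $A(r'')=\set{v}$, deleting $u$ makes the two sets equal. Hence the core may contain a user who is not the sole occupant of any resource (here $u$), so your count ``completeness already pins down $k$ users'' does not cover all core members; your remark that the inequality is preserved when deleting users \emph{outside} $A(r')\cup A(r'')$ misses exactly the problematic case. The bound $k$ does still hold, but it needs an extra observation: at most one user can be required for eligibility reasons (it must be the unique element of the symmetric difference of $A(r')$ and $A(r'')$), and when such a user exists, $r'$ or $r''$ has at least two assigned users, so at most $k-1$ users can be required for completeness. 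The paper's own proof is terse on this point, but it does not assert the false claim that eligibility is never affected by removals.

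Second, your ``cleanest way'' asserts that the iterative removal terminates in a relation whose user set is \emph{exactly} the core of the original relation $A'$. That is false: with $A'(r')=\set{u}$ and $A'(r'')=\set{v,w}$ (and the $\forall$ constraint satisfied), the core of $A'$ is $\set{u}$, yet the process stops with two users. What you actually need is only the containment $\ker(A':U\times R,c)\subseteq$ the terminal user set, and this requires its own short argument: a user required by $A'$ remains required in every valid subrelation met along the way, because restricting to a user set $W$ replaces $A'(r')$ and $A'(r'')$ by their intersections with $W$, so sole occupancy and the singleton symmetric difference both persist as long as the user is present. With that containment, plus the bound of $k$ on the terminal user set (corrected as above for the $\exists$ case), your proof goes through; as written, it has a gap.
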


\begin{proof}
Let $V$ be a set of $k$ distinct users and consider $A$ where $\card{A(r)}=1$ for each $r\in R$ and $A(R)=V$.
Then $\ker(A : U \times R,c) = V$ and $\card{\ker(A : U \times R,c)} = k$.
% $\psi_{c,A}(\I) = V$. 
Now, for any for any relation $A'$ valid with respect to $U \times R$ and $c$, any subset of $A'(R)$ of size at least $k+1$ must contain two users who are both assigned to the same resource, and thus one of them can be removed without violating completeness or satisfiability. 
Thus $\card{\ker(A' : U \times R , c)} \leqslant \card{\ker(A : U \times R, c)}$, from which the result follows.
% This proves that $\psi_{c, A'}(\I) \le \psi_{c, A}(\I)$ for any solution $A'$, and thus $\Psi_c(\I) \le k$.
% Value $\card{\psi_{c,A}}$ is maximal when, for any $r \in R$, $A(r) = \left\{u_r\right\}$ and, for any $r_1 \neq r_2$, we have $u_{r_1} \neq u_{r_2}$. Thus $\Psi_{c} = k$.
\end{proof}

\begin{proposition}
Constraint $(R',\leq,t)$ is $k$-bounded.
\end{proposition}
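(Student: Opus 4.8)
The plan is to exploit the fact that ``$\le$''-type cardinality constraints are preserved under deletion of a user. Fix any relation $A$ that is valid with respect to $U \times R$ and $c = (R',\le,t)$, and let $u \in \ker(A : U \times R, c)$, so that $A \setminus u := \set{(v,r) \in A : v \ne u}$ is not valid. Since $A$ is authorized, so is $A \setminus u$; hence $A \setminus u$ is either incomplete or fails to satisfy $c$.

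First I would rule out the second possibility. For every $r \in R$ we have $(A\setminus u)(r) \subseteq A(r)$, so $(A\setminus u)(R') \subseteq A(R')$ and therefore $\card{(A\setminus u)(R')} \le \card{A(R')} \le t$, because $A$ satisfies $c$. Thus $A \setminus u$ satisfies $c$, and consequently $A \setminus u$ must be \emph{incomplete}: there is a resource $r_u \in R$ with $(A\setminus u)(r_u) = \emptyset$. Since $A$ is complete, $A(r_u) \ne \emptyset$, so necessarily $A(r_u) = \set{u}$.

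This defines a map $u \mapsto r_u$ from $\ker(A : U \times R, c)$ into $R$, and it is injective: $r_u = r_{u'}$ forces $\set{u} = A(r_u) = A(r_{u'}) = \set{u'}$. Hence $\card{\ker(A : U \times R, c)} \le \card{R} = k$, which is exactly the assertion that $(R',\le,t)$ is $k$-bounded. To see that the bound is tight I would exhibit, for instance when $t \ge \card{R'}$, a relation $A$ with $\card{A(r)} = 1$ for every $r$ and all these singletons pairwise distinct: such an $A$ is valid with respect to $U \times R$ and $c$, and removing any of its $k$ users destroys completeness, so every one of them lies in the core.

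There is no real obstacle here. The only point requiring care is the bookkeeping around the definition of ``$A$ requires $u$'': one must observe that, because $A$ is complete, the sole way $A \setminus u$ can fail to be complete is for $u$ to be the unique user assigned to some resource, and that deletion can never turn a satisfied ``$\le$''-constraint into a violated one. After these two observations the result follows by a trivial counting argument (injection into $R$), mirroring the proofs of Propositions~\ref{th-boundedc-bodu} and~\ref{th-boundedc-sodu}.
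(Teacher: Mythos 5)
Your proof is correct and follows essentially the same approach as the paper: deleting a user can never violate a ``$\le$''-type cardinality constraint, so only completeness can force a user into the core, which bounds the core by $k$. Your injection $u \mapsto r_u$ simply makes explicit the counting that the paper's proof states informally (that the core is largest when $\card{A(r)}=1$ for all $r$), and your tightness example matches the paper's implicit extremal configuration.
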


\begin{proof}
\sloppy
Given any valid solution $A$, the removal of any user cannot make $A$ non-eligible with respect to \mbox{$c=(R', \leq, t)$}, but may violate completeness. 
Hence, $\ker(A : U \times R,c)$ is largest when $|A(r)|=1$ for all $r$ and $A(R) = k$, in which case we have \mbox{$\card{\ker(A : U \times R,c)} = k$.}
%Let constraint $c$ be $(R',\leq,t)$. We partition $U$ into two subsets $U'$ and $U''$ such that $\card{U'} \leq t$ and we set $A(R')=U'$ and $A(R\backslash R') = U''$. We suppose $A$ is complete. Removing any user in $U''$ does not affect the eligibility, but only the completeness of $A$. This is why we can remove at most $\card{R \backslash R'}$ users. Furthermore, removing any user in $U'$ does not affect the eligibility too because, in any case, the cardinality of $A(R')$, after removal, remains smaller than $t$. So, we can remove at most $\card{R'}$ users if $t \geq \card{R'}$ and $t$ users if $t < \card{R'}$. Hence, we can remove at most $\card{R'} + \card{R\backslash R'} = k$ users.
\end{proof}

Similarly, the global cardinality constraint $c = (\le,t)$ is $k$-bounded because, in this case too, any removal does not affect the eligibility of the relation; it can only affect the completeness.
% Thus, for any satisfiable instance $\I$, $\Psi_{c}(\I) \leq k$. 
Obviously, these results remain true with $<$ instead of $\le$. However, as we will see, they do not hold if we replace $\le$ by $=$ or $\ge$. 
Indeed, a constraint such as $(=, t)$ requires that some set of $t$ users cannot be removed. 
Hence, if $t$ is not bounded by a function of $k$ only, the constraint is not $f(k)$-bounded for any computable function $f$.

\begin{proposition}\label{prop:cardinalitybcc}
Constraints $(R',=,t)$ and $(R',\geq,t)$ are $2 \max\{k, t\}$-bounded, but not $(\max\{k, t\}-1)$-bounded.
\end{proposition}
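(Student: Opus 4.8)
The plan is to prove the two assertions separately, in each case by reasoning directly about which users can lie in the core $\ker(A : U\times R, c)$ of a relation $A$ that is valid with respect to $U\times R$ and the constraint $c$, where $c$ is either $(R',=,t)$ or $(R',\ge,t)$.

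\textbf{The $2\max\{k,t\}$ bound.} Let $A$ be valid with respect to $U\times R$ and $c$, and suppose $u\in\ker(A:U\times R,c)$, so $A\setminus u = \{(v,r)\in A : v\neq u\}$ is either incomplete or fails to satisfy $c$. First I would observe that $A\setminus u$ is incomplete only when $u$ is the unique authorized user of some resource, and hence that the set $V_1$ of such users has $|V_1|\le k$. For the other case I would use that a valid $A$ satisfies $|A(R')|=t$ (respectively $|A(R')|\ge t$); since $|A(R')\setminus\{u\}| = |A(R')| - [u\in A(R')]$, deleting $u$ violates $(R',=,t)$ exactly when $u\in A(R')$, and violates $(R',\ge,t)$ exactly when $u\in A(R')$ and $|A(R')|=t$. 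In both cases the set $V_2$ of such $u$ satisfies $|V_2|\le t$: for $(R',=,t)$ it is a subset of $A(R')$, which has size $t$; for $(R',\ge,t)$ it is empty unless $|A(R')|=t$, in which case it is again a subset of $A(R')$. Hence $\ker(A:U\times R,c)\subseteq V_1\cup V_2$ and $|\ker(A:U\times R,c)|\le k+t\le 2\max\{k,t\}$.

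\textbf{Failure of the $(\max\{k,t\}-1)$ bound.} For given $k\ge 1$ and $t\ge 1$, I would exhibit, over a user set of size at least $t+k-1$, a valid relation whose core has size at least $\max\{k,t\}$. Writing $R=\{r_1,\dots,r_k\}$, take $R'=\{r_1\}$ and define $A$ by $A(r_1)=\{u_1,\dots,u_t\}$ for $t$ distinct users, and $A(r_i)=\{v_{i-1}\}$ for $2\le i\le k$, where $v_1,\dots,v_{k-1}$ are further, pairwise distinct users that appear nowhere else. Then $A$ is complete, is authorized with respect to $U\times R$, and satisfies $c$ because $|A(R')|=|A(r_1)|=t$; thus $A$ is valid. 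Deleting any $u_j$ forces $|A(R')|=t-1$, which violates both $(R',=,t)$ and $(R',\ge,t)$, while deleting any $v_j$ leaves its resource unassigned; since $u_1,\dots,u_t,v_1,\dots,v_{k-1}$ are distinct, all of them lie in the core, giving $|\ker(A:U\times R,c)|\ge t+(k-1)$. As $t\ge 1$ and $k\ge 1$ imply $t+(k-1)\ge\max\{k,t\}$, the constraint is not $(\max\{k,t\}-1)$-bounded.

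I do not expect a real obstacle; the only care needed is in the bookkeeping for the upper bound — cleanly isolating the two distinct reasons a deletion can break validity, and, for $(R',\ge,t)$, remembering that a user in $A(R')$ is dangerous only when $|A(R')|$ equals $t$ — after which $k+t\le 2\max\{k,t\}$ is immediate. The lower-bound construction is routine once one notices that piling several users onto a single resource of $R'$ forces all of them into the core.
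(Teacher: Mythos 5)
Your proof is correct and follows essentially the same route as the paper's: the core is bounded by splitting it into users whose removal breaks completeness (at most $k$) and users of $A(R')$ whose removal breaks the cardinality condition (at most $t$), giving $k+t\le 2\max\{k,t\}$, and the negative claim is witnessed by a tight instance. If anything, your explicit two-case analysis and concrete construction are more carefully justified than the paper's brief sketch of the same argument.
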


\begin{proof}
We only give the proof for $(R', =, t)$, the other one being similar. 
One can observe that $\ker(A : U \times R,c)$ is largest when $|A(r)|=1$ for any $r \in R \setminus R'$, $|A(R \setminus R')| = |R \setminus R'|$, and $A(R') \cap A(R \setminus R') = \emptyset$.
In this case we have $\card{\ker(A : U \times R,c)} \le |R \setminus R'| + t \le 2 \max\{k, t\}$. 
% maximum value of $\psi_{c, A}(\I)$ is reached when $|A(r)|=1$ for any $r \in R \setminus R'$, $|A(R \setminus R')| = |R \setminus R'|$, and $A(R') \cap A(R \setminus R') = \emptyset$, in which case we have $\psi_{c, A}(\I) \le |R \setminus R'| + t \le 2 \max\{k, t\}$.

Concerning the negative result, observe that if $\max\{k, t\} = t$, then no user of $A(R')$ can be removed from any valid solution, and if $\max\{k, t\} =k$, then there exists solutions in which $A(R) \ge k$ and the removal of any user from $A(R)$ either violates a constraint or breaks completeness. 
%Let $c$ be either $\left(R',=,t\right)$ or $\left(R',\geq,t\right)$. Let $U'$ be a subset of $U$ such that $\card{U'} = t$. For any $r' \in R'$, we set $A(r')= U'$ and we suppose $A$ is complete. It is impossible to remove any user in $U'$ because the relation would become ineligible: for any $u \in U'$, $\card{A[-u](R')} < t$. Hence, $\Psi_c(\I) \geq t$. If we fix $k$ and assume that $t$ is a monotonically increasing and unbounded function $g(n)$ such that $g(n)\le n$. Then, $\Psi_c(\I)$ tends to infinity when $n \rightarrow \infty$.
\end{proof}

\end{document}